\newcommand{\xmark}{\ding{53}}%
\newcommand{\bbcheckmark}{\ding{52}}
\newcommand\bundermat[2]{%
  \makebox[0pt][l]{$\smash{\underbrace{\phantom{%
    \begin{matrix}#2\end{matrix}}}_{\text{#1}}}$}#2}
\def\1{\mathbf{1}}
\newcommand{\vv}[1]{\boldsymbol{#1}}
\newcommand{\vvi} [2]{ \boldsymbol{#1}^{(#2)} }
\newcounter{subeqn} %
\def\ignore#1\endignore{}
\newcolumntype{h}{@{}>{\ignore}l<{\endignore}} 
\newcolumntype{x}[1]{%
>{\centering\hspace{0pt}}p{#1}}%
\newenvironment{sketch}{{\noindent \it Sketch of Proof:~}}
\newtheorem{problem}{Problem}
\newtheorem{lemma}{Lemma}
\newtheorem{theorem}{Theorem}
\newcommand{\name}{SARDO}
\title{\name{}: An Automated Search-and-Rescue Drone-based Solution for Victims Localization}
\author{Antonio~Albanese,~\IEEEmembership{Student Member,~IEEE,}
Vincenzo~Sciancalepore,~\IEEEmembership{Senior Member,~IEEE,}
Xavier~Costa-P\'erez,~\IEEEmembership{Senior Member,~IEEE}
\thanks{\textit{A. Albanese is with NEC Laboratories Europe and University Carlos III of Madrid, 28911 Legan\'es, Spain.\newline
V. Sciancalepore and X. Costa-P\'erez are with NEC Laboratories Europe, 69115 Heidelberg, Germany. \newline
Emails: \{antonio.albanese, vincenzo.sciancalepore, xavier.costa\}@neclab.eu}.}%
}
\begin{document}

\IEEEtitleabstractindextext{
\begin{abstract}
Natural disasters affect millions of people every year. Finding missing persons in the shortest possible time is of crucial importance to reduce the death toll. This task is especially challenging when victims are sparsely distributed in large and/or difficult-to-reach areas and cellular networks are down.


In this paper we present \name, 
a \emph{drone-based} search and rescue solution that exploits the high penetration rate of mobile phones in the society to localize missing people. \name~is an autonomous, all-in-one drone-based mobile network solution that does not require infrastructure support or mobile phones modifications. It builds on novel concepts such as \emph{pseudo-trilateration} combined with machine-learning techniques to efficiently locate mobile phones in a given area. Our results, with a prototype implementation in a field-trial~\cite{onlinevideo}, show that \name{} rapidly  
determines the location of  mobile phones ($\sim 3$ min/UE) in a given area with an accuracy of few tens of meters and at a low battery consumption cost ($\sim 5\%$).

State-of-the-art localization solutions for disaster scenarios rely either on mobile infrastructure support or exploit onboard cameras for human/computer vision, IR, thermal-based localization. To the best of our knowledge, \name~is the first \emph{drone-based cellular search-and-rescue solution} able to accurately localize missing victims through mobile phones. 


%
\end{abstract}


\begin{IEEEkeywords}
UAV-based cellular coverage, Search and Rescue Operations, Single-UAV localization, Pseudo-Trilateration, Convolutional Neural Network (CNN), Long Short-Term Memory (LSTM).
\end{IEEEkeywords}
}
\maketitle

\thispagestyle{empty}	

\section{Introduction}
\label{s:intro}

In 2017, $335$ natural disasters affected over $95.6$ million people, killing an additional $9,697$ and costing a total of $335$ billion USD. 
 The deadliest event in 2017 was the landslide in Sierra Leone in August, with $1102$ reported dead or missing, followed by Cyclone Okchi in December with $884$ reported dead or missing in India. Notably, these two events are characterized by a high number of \emph{missing people}, representing over half of the total death toll \cite{disaster_report}.
In addition to natural disasters, human-generated threats (e.g. fires, electrical outages, terrorism) might as well 
require solutions to improve first responders' capabilities to address such situations. 


One of the most compelling challenges when a disaster strikes is to quickly establish a first contact with affected victims that might be trapped or hidden from rescue teams. The response to such a dire situation shall be prompt and effective even when the terrestrial communication network is down, the debris makes user GNSS information not available or victims are physically incapable to transmit their current location, thereby exacerbating the search and rescue procedure. 

\emph{Unmanned Aerial Vehicles} (\emph{UAVs}) or \emph{Drones} have recently emerged as a cost-efficient alternative to address emergency scenarios~\cite{flying_cell} for multiple reasons. First, UAVs can be rapidly deployed in disaster areas providing on-demand mobile networks.
Second, UAVs may rapidly approach difficult-to-reach locations, such as mountains, deserts, or devastated areas and cover large search areas with sparse victims distribution. 
Finally, given the high penetration rate of mobile devices in our society, it can be reasonably assumed that victims are equipped with smart devices, e.g., smart phones and wearables, that can be detected by UAV mobile networks.
In this paper, we present our Search-And-Rescue DrOne-based solution, \name{}, an all-in-one localization system that supports first responders to quickly identify and localize victims in disaster areas.
\name{} $i$) scans and spots target users by means of an integrated IMSI-catcher, $ii$) applies machine-learning principles on the distance measurements 
to perform our novel pseudo-trilateration localization technique, $iii$) relies on a neural network to predict future target positions, $iv$) closes the feedback loop with a control component that automatically adjusts the UAV trajectory for improving the localization accuracy. \name{} has been implemented and tested in a field-trial scenario 
~\cite{onlinevideo} with COTS components. Our results prove the feasibility of the solution and provide quantitative system performance figures.

\section{\name~Framework overview} \label{s:framework}


State-of-the-art \emph{localization} solutions rely on \emph{trilateration} and \emph{triangulation} methods, which combine measurements collected by different anchors.
In contrast, hereafter we present \name{} that aims to find victims' locations by keeping track of their mobile phone signals in disaster areas with the information collected by a \emph{single} UAV that sweeps a given predefined area and acts as a \emph{portable cellular base station}. The \emph{building blocks} of \name{} are depicted in~Fig.~\ref{fig:block_diagram}. 
 
\textbf{Time-of-flight measurement process.} A UAV equipped with a light-weight base station scans on a predefined disaster area to discover victims\footnote{Our analysis focuses on single-target detection. However, \name{} can be used to locate multiple targets in a sequential manner or can be readily extended to locate users' clusters by means of a customized implementation.}. No protocol stack modifications are introduced such that compatibility with commercial mobile phones is guaranteed, as described in Section~\ref{s:implementation}. This allows measuring the time of flight (ToF) of a user uplink signal that is fed into the novel \emph{pseudo-trilateration} algorithm, as detailed in Section~\ref{s:pseudo-tri}.
 
\textbf{User position estimate.} A single view-point, the UAV, exploits \emph{time diversity} to retrieve different user ToF values that are combined to estimate the current user position as described in Section~\ref{s:aicnn}. In addition, when the user moves an estimate of its motion trajectory is derived by means of a Convolutional Neural Network (CNN) that extracts features from different ToF values. Such features are processed by the output deep Feed-Forward Neural Network (FFNN) that learns and implements the concept of pseudo-trilateration. 

\textbf{Future user positions prediction.} The returned set of previous user positions is used to predict future locations. A Long Short-Term Memory (LSTM) neural network keeps memory of previous system states and forecasts a set of future user positions. This is explained in detail in Section~\ref{s:lstm}.

\textbf{UAV Relocation.} A large prediction time window results in a lower accuracy of the obtained prediction. Thus, we implement a component that leverages control theory to adjust the next UAV motion trajectory. On the one hand, a high accuracy in the forecasting process fosters the UAV to focus on the expected user location and reduce the scanning area. On the other hand, detected errors in the prediction process forces the UAV to enlarge the scanning area in order to recover from previous wrong decisions, as shown in Section~\ref{s:controller}.

The execution of the entire process requires some time. Therefore, we design the system as a sequence of atomic localization tasks executed within a predefined short time threshold, e.g., tens of seconds. Upon completion of each task, a new set of inputs is fed to the UAV that quickly relocates and starts the probing process. Intuitively, the variable duration of each task impacts on the number of collected measurements that, in turn, drive the accuracy of the pseudo-trilateration (and the forecasting phase). Thus, we properly design \name{} to accurately provide users position within few minutes and explain the setting parameters in detail in Section~\ref{s:results}.

\begin{figure}[t!]
      \centering
      \includegraphics[trim={3mm 3.3cm 7mm 3.4cm}, clip, width=\linewidth ]{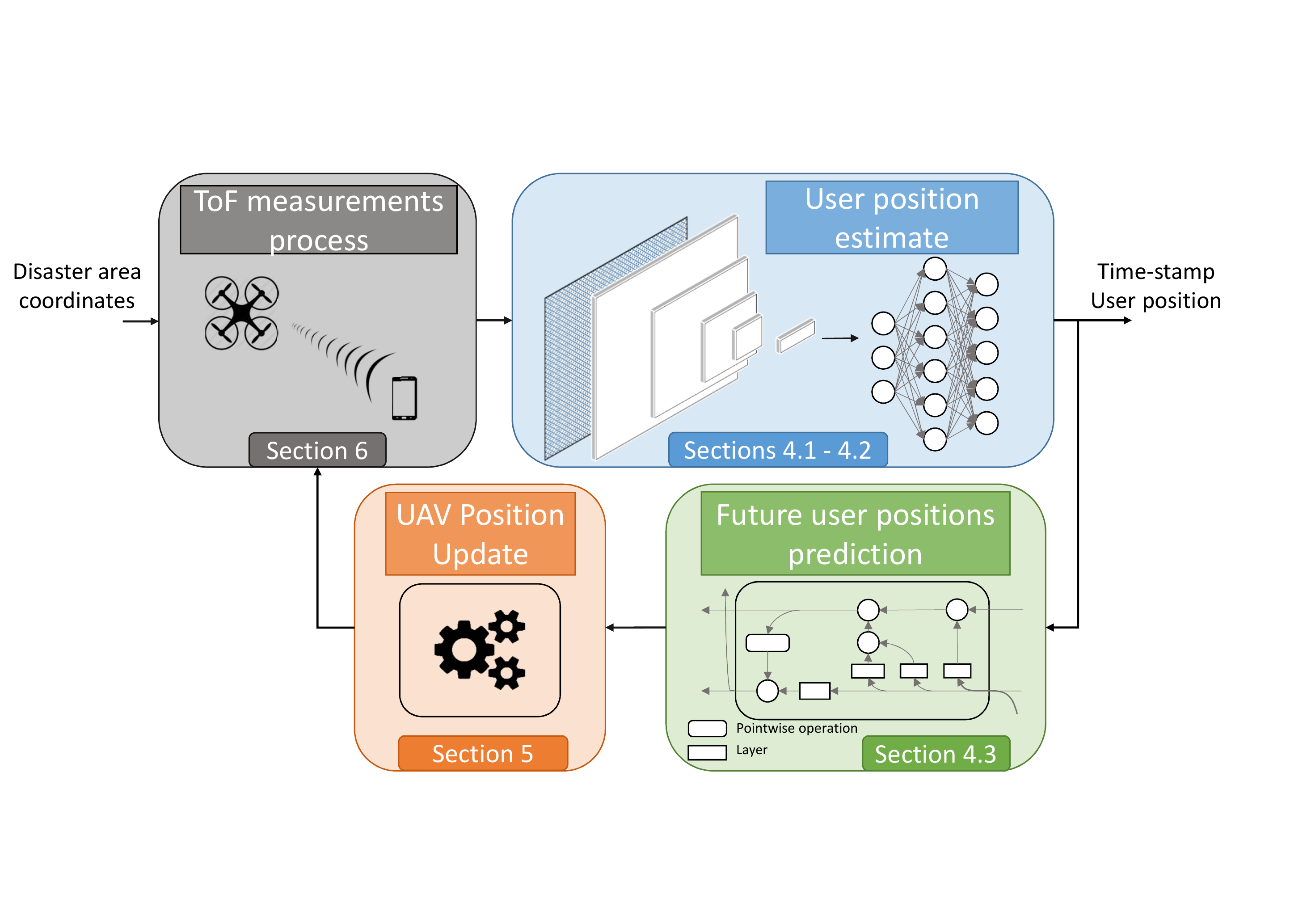}
      \caption{Overview of \name{}'s Building Blocks}
      \label{fig:block_diagram}
\end{figure}

\section{Geo-Localization Model} \label{s:motivation}

The simplest localization technique proposed in the literature for static objects requires distinct reference points, dubbed as anchors, to measure their distances from the target. It is known as trilateration, if $2$-dimensional coordinates are sought, or multi-lateration technique, in case of $3$-dimensional coordinates\footnote{Note that the terms ``trilateration'' and ``multilateration'' are used interchangeably throughout the paper.}.  However, these techniques suffer from major limitations when the measurements get affected by noise and/or mobility. In the following, we describe the multi-lateration technique pointing out its limitations and then we introduce our novel technique, namely \emph{pseudo-trilateration}, that is robust to noisy measurements and mobile objects.

\subsection{Legacy multi-lateration}
\label{s:legacy-multi}

Commercial localization systems, such as GNSS, effectively address noise and mobility issues providing an accuracy up to $5$ meters~\cite{SPECHT2018}. However, they are developed and managed by third (untrusted) parties fostering telco operators to deploy and control their own reliable localization solutions. On the other hand, cellular networks---where user localization techniques are implemented assuming base stations acting as anchors---may provide a localization accuracy up to hundreds of meters~\cite{Leontiadis2014}. This is mostly due to the fact that LTE and 4G systems leverage the Received Signal Strength Indicator (RSSI) that roughly provides an estimation of the distance based on pre-defined fingerprints. Additionally, for mobile users only few base stations---the ones covering the area wherein the user moves---may retrieve useful distance information thereby requiring full cooperation of a high number of anchors in the area in order to jointly calculate the instantaneous user position. Ideally, a full deployment of base stations within the considered area would minimize the localization error but, at the same time, increase the overhead due to the combination of different measurements. 

Such localization solutions apply the well-known concept of multi-lateration that involves a number of viewpoints (more than four in case of $3$-dimensional geo-localization) simultaneously measuring the distance from the target. Analytically, we assume $I=|\mathcal{I}|$ anchors where each anchor position $\vv{x}_i=\{x_i,y_i,z_i\},\forall i\in\mathcal{I}$ is known a-priori. We can jointly compute the position of the target object $\vv{x}_t=\{x_t,y_t,z_t\}$ solving the following set of equations
\begin{equation}
\label{eq:multi-basis}
(x_t-x_i)^2+(y_t-y_i)^2+(z_t-z_i)^2 = \gamma_i^2, \quad\forall i\in\mathcal{I},
\end{equation}
where the distance $\gamma_i$ between the anchor $i$ and the target object $t$ identifies the radius of a sphere centered in $\vv{x}_i$ and passing through $\vv{x}_t$. If the retrieved distances are not biased, the above set of equations admits a unique solution $\vv{x}_t$ represented by the intersection of all involved spheres. However, this result might not hold in real wireless environments due to the presence of channel fading where the measured distance---denoted as $\hat{\gamma_i}$---is calculated through Time of Flight (ToF) or Time of Arrival (ToA)~\cite{tof}. Indeed, this might lead to multiple intersection points between any possible pair of spheres and require an approximation on the location of the target. We can rewrite Eq.~\eqref{eq:multi-basis} as a linear system of $(I-1)$ equations in matrix form $S\vv{x}=\vv{p}$, where
\begin{equation}
\nonumber
\label{eq:matr_form} 
S= \begin{bmatrix}
x_I-x_1 & y_I-y_1 & z_I-z_1 \\
x_I-x_2 & y_I-y_2 & z_I-z_2 \\
\vdots & \vdots & \vdots \\
x_I-x_{I-1} & y_I-y_{I-1} & z_I-z_{I-1}
\end{bmatrix},
\vv{x} = \begin{bmatrix}
x_t\\
y_t\\
z_t
\end{bmatrix} 
\end{equation}

and 
\begin{equation}
\nonumber
\label{eq:matr_form2}
\vv{p}= \begin{bmatrix}
(\hat{\gamma}_1^2-\hat{\gamma}_I^2)-(x_1^2-x_I^2)-(y_1^2-y_I^2)-(z_1^2-z_I^2) \\
(\hat{\gamma}_2^2-\hat{\gamma}_I^2)-(x_2^2-x_I^2)-(y_2^2-y_I^2)-(z_2^2-z_I^2) \\
\vdots \\
(\hat{\gamma}_{I-1}^2\!\!-\hat{\gamma}_I^2)\!-\!(x_{I-1}^2\!\!-\!\!x_I^2)\!-\!(y_{I-1}^2\!\!-\!\!y_I^2)\!-\!(z_{I-1}^2\!\!-\!\!z_I^2)
\end{bmatrix}.
\end{equation}

Let us denote $\vv{x}^*$ the approximated target position due to the measurements noise, namely $\vv{x}^*$ satisfying $S\vv{x^*}\approx \vv{p}$. Finding the derivative of the sum of the squares of the residuals yields that $S^TS\vv{x}^*=A^T\vv{p}$.
While a non-singular $S^TS$ exhibits $\vv{x}^* = (S^TS)^{-1}S^T\vv{p}$ as unique solution, we apply the non-linear least squares method when it is close to singular that minimizes the sum of the squares of the errors on the distances.

Let us denote the function to minimize as the following
\begin{equation}
\label{eq:optfunction}
F(x,y,z) = \sum\limits_{i=1}^I (\gamma_i - \hat{\gamma_i})^2 = \sum\limits_{i=1}^I f_i(x,y,z)^2,
\end{equation}
where
\begin{equation}
\label{eq:function}
f_i(x,y,z)\!\!=\!\!\gamma_i - \hat{\gamma_i}\!=\!\sqrt{(x_t-x_i)^2\!+\!(y_t-y_i)^2\!+\!(z_t-z_i)^2}-\hat{\gamma_i}.
\end{equation}
We can calculate the vector $\vv{g}=[\partial F/\partial x,\partial F/\partial y,\partial F/\partial z]$ of the partial derivatives with respect to $x,y$ and $z$ as
\begin{equation}
\label{eq:part}
\frac{\partial F}{\partial x}\!=\! 2\!\sum\limits_{i=1}^{N}f_i\frac{\partial f_i}{\partial x}; \frac{\partial F}{\partial y}\!=\! 2\!\sum\limits_{i=1}^{N}f_i\frac{\partial f_i}{\partial y}; \frac{\partial F}{\partial z}\!=\! 2\!\sum\limits_{i=1}^{N}f_i\frac{\partial f_i}{\partial z},
\end{equation}
and express it in matrix form as $\vv{g} = 2 J^T \vv{f}$, where $\vv{f}=\{f_i, \forall i\in\mathcal{I}\}$ and $J$ is the Jacobian matrix. Now applying the Newton iteration\footnote{We adopt the Newton iteration due to its high convergence speed towards the steepest decent. However, other techniques can be readily applied, e.g., the gradient descent as reported in~\cite{Naraghi-Pour2014}.} technique~\cite{ToA-based2006}, we can find the approximate solution at step $\vv{r}_{k+1} = \{x,y,z\}$ as follows
\begin{equation}
\vv{r}_{k+1} = \vv{r}_k - (J_k^TJ_k)^{-1}J_k^T\vv{f}_k.
\end{equation} 
Note that $J_k$ and $\vv{f}_k$ are evaluated at $\vv{r}_{k}$ whereas $r_{1}=\{\hat{x},\hat{y},\hat{z}\}$ denotes the initial condition.
The time complexity of this approach is driven by the convergence speed of the Newton iteration process. Indeed, a fine-tuning can be applied to retrieve the position of the target with high accuracy within affordable time. However, dealing with mobile targets may exacerbate the problem complexity thereby reducing the accuracy of the localization system. 


\subsection{A novel technique: Pseudo-Trilateration}
\label{s:pseudo-tri}

The above-mentioned technique suffers from the following major limitations: $i$) uncertainty: wireless channels are strongly affected by fading and shadowing that may alter the distance measurements and, in turn, the calculation process and $ii$) time-variability: the classical solution requires more than four anchor nodes that might be covered by different devices (space diversity) or by the same device on different locations (time diversity); in both cases the user may move and the trilateration process may report inconsistent results.

To overcome such limitations, we propose a novel localization approach that does not require a highly-dense coverage of anchors and easily handles high-mobility users: the \textbf{Pseudo-trilateration} concept. The core idea is to use a single anchor that retrieves multiple distance measurements over a time window moving through different points, namely along some anchor motion trajectory. Such measurements are properly combined to identify the set of positions covered by the target (if moving) within the considered time window. 

 \begin{figure*}[t!]
    \centering
    \subfigure[2D localization problem with static target and single solution]
    {
       \centering
        \includegraphics[clip,trim=0cm 0cm 0cm 0cm, width=0.23\textwidth]{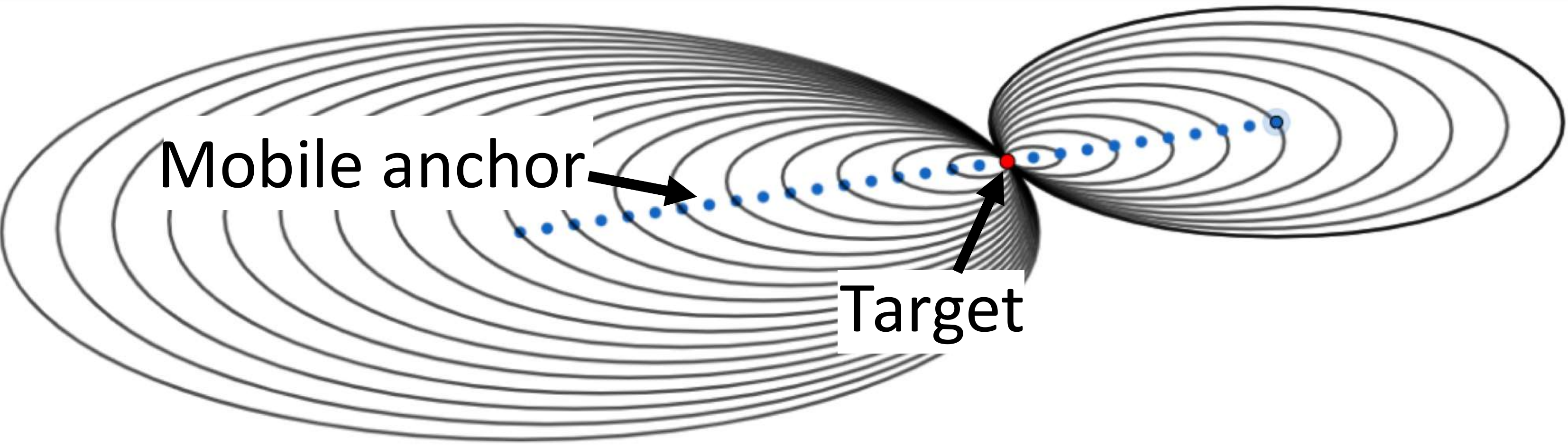}
        \label{fig:2d_1point}
    }
    \subfigure[2D localization problem with static target and double solution]
    {
        \centering
        \label{fig:2d_2points}
        \includegraphics[clip, trim=0cm 3cm 0cm 0.5cm, width=0.23\textwidth]{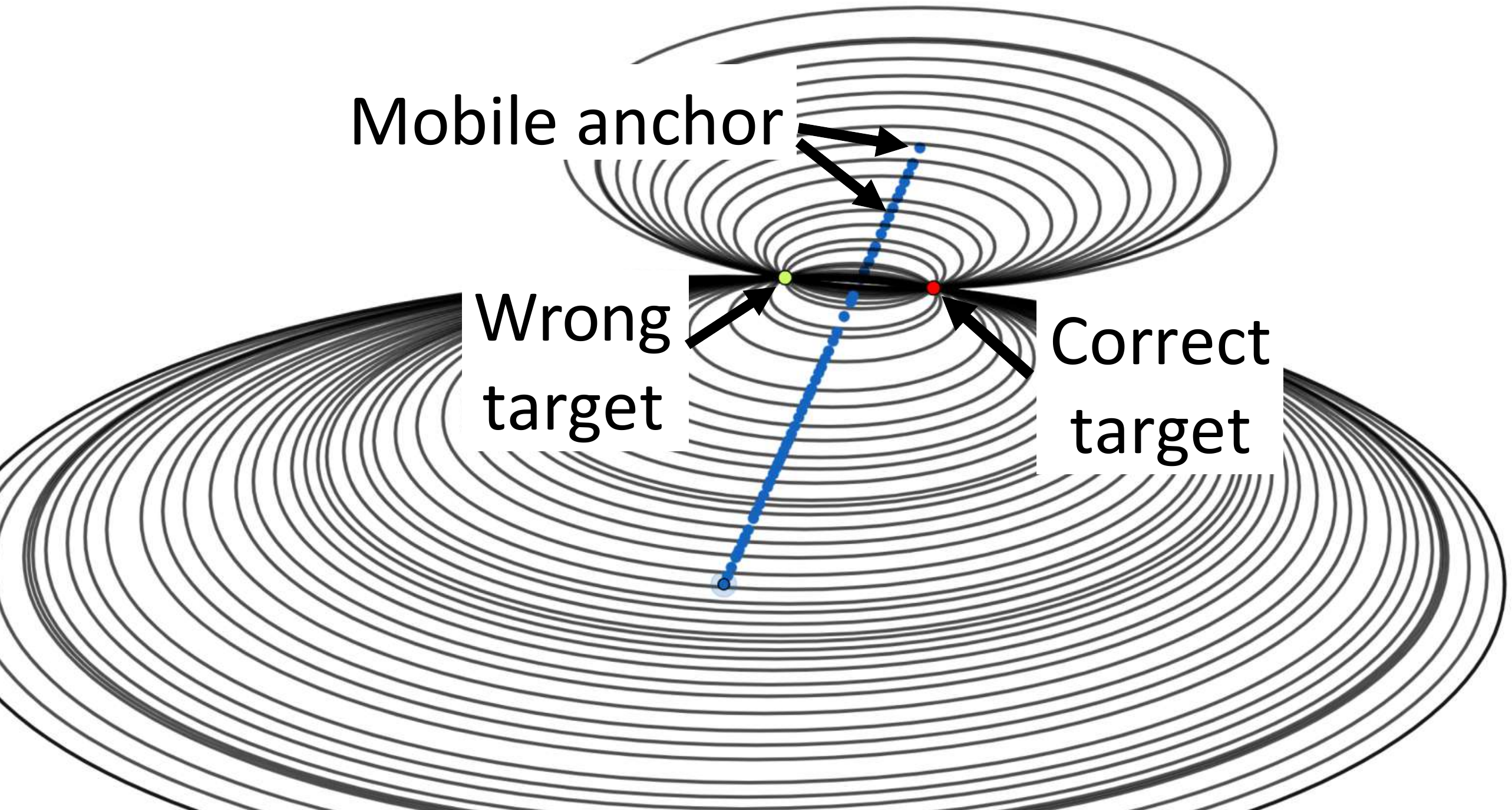}
    }    
    \subfigure[2D localization problem with mobile target and single solution]
    {
       \centering
        \includegraphics[clip, width=0.23\textwidth]{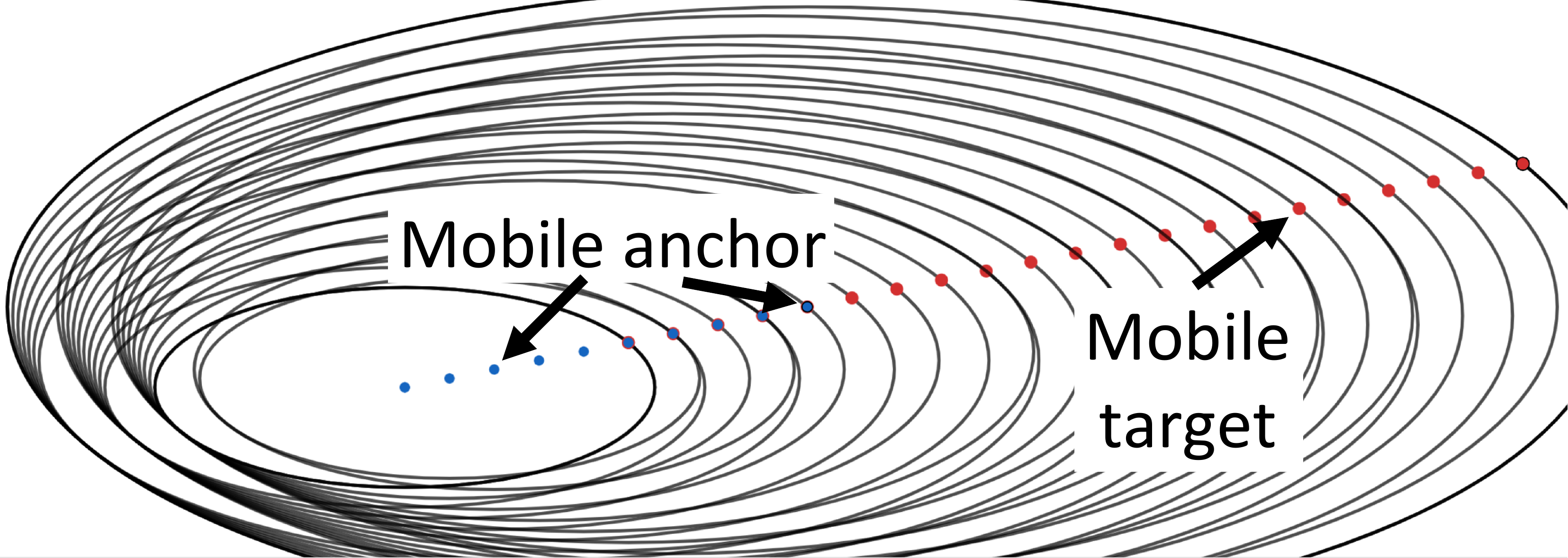}
        \label{fig:2d_1point_mob}
    }
    \subfigure[2D localization problem with mobile target and double solution]
    {
        \centering
        \label{fig:2d_2points_mob}
        \includegraphics[clip, width=0.23\textwidth]{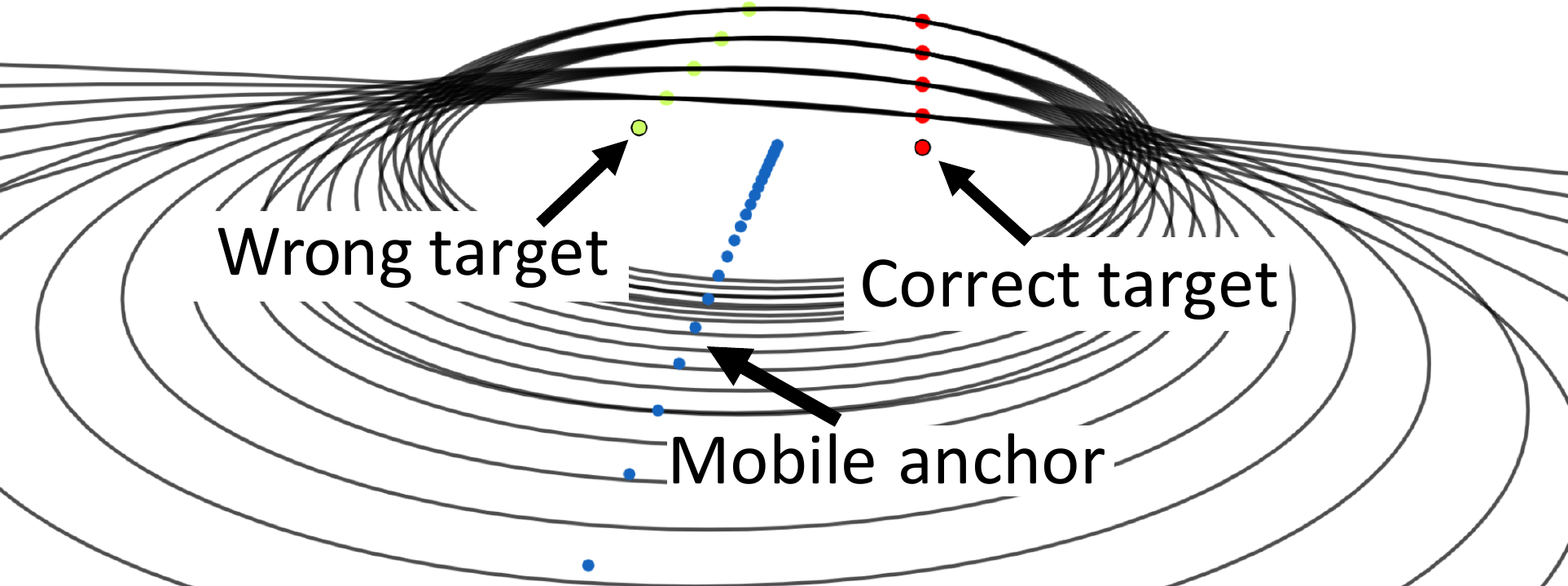}
    }    
    \label{fig:geometric}
\end{figure*}
 \begin{figure}[t!]
    \centering
   \subfigure[3D localization problem with static target and infinite solutions]
   {
       \centering
       \label{fig:3d_1circ}
       \includegraphics[clip, width=0.22\textwidth]{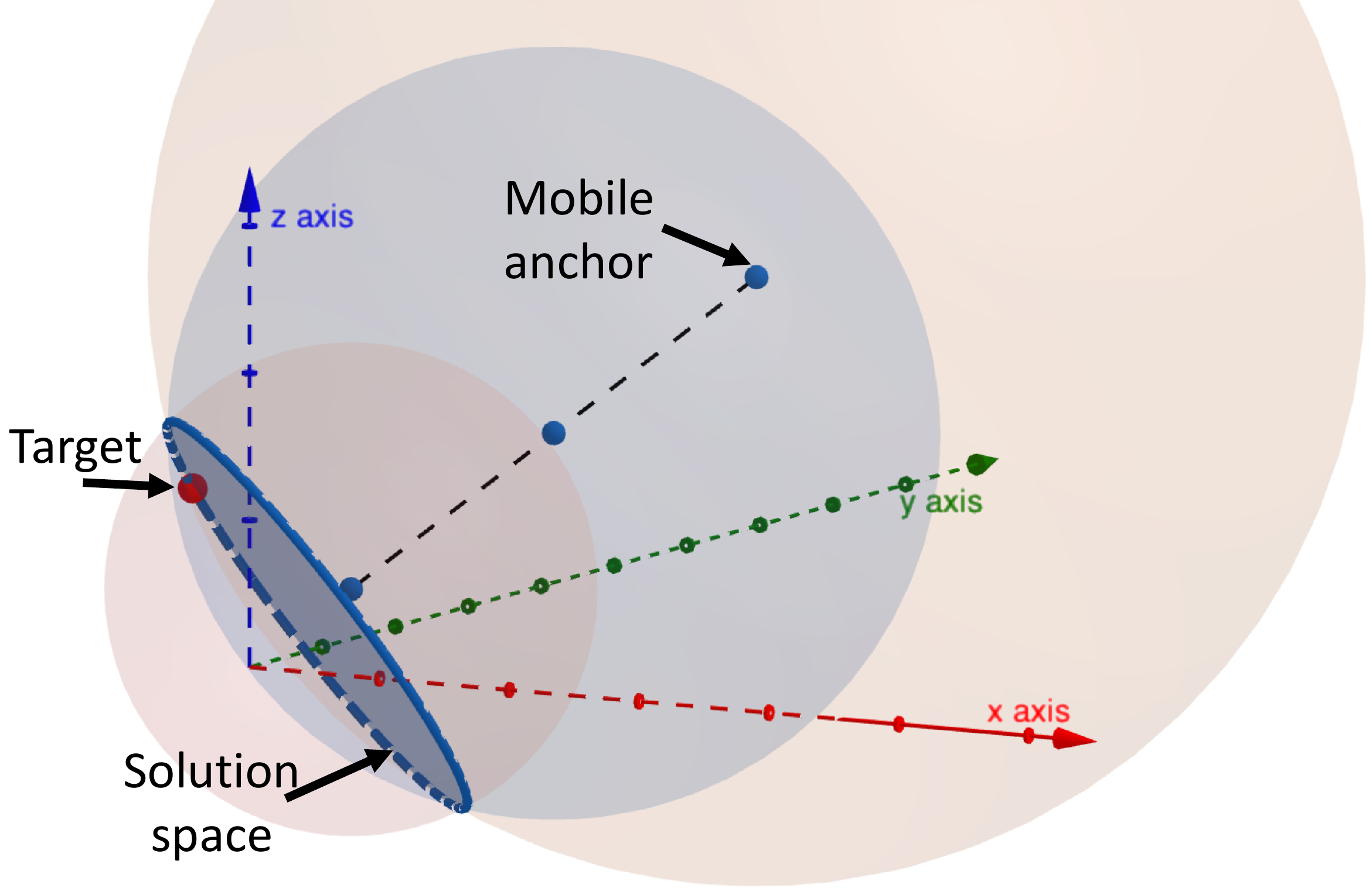}
   }
 \subfigure[3D localization problem with mobile target and infinite solutions]
    {
        \centering
        \label{fig:3d_1circ_mob}
        \includegraphics[clip, width=0.22\textwidth]{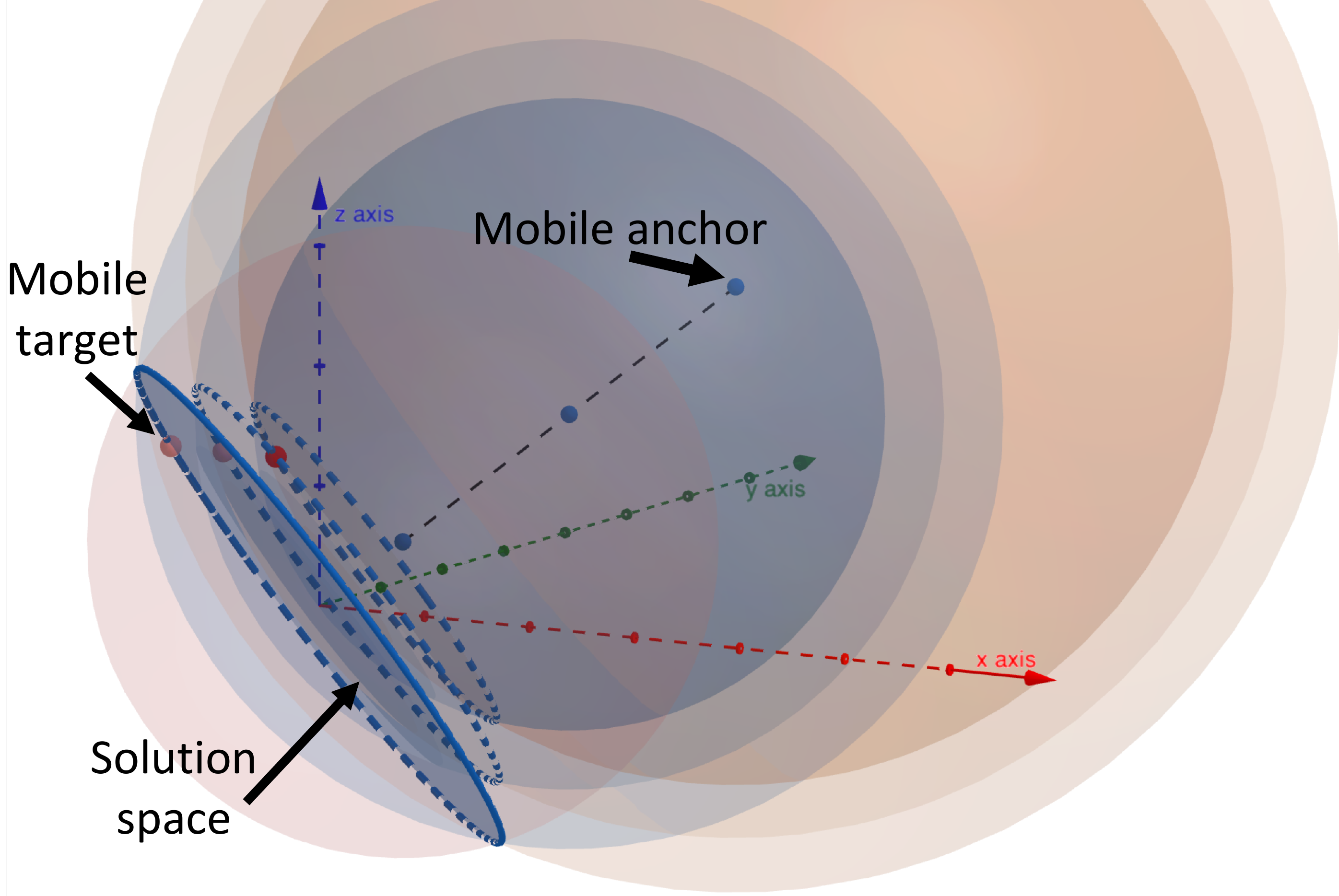}
    }
    \caption{Geometrical considerations on the localization}
    \label{fig:geometric2}
\end{figure}

Let us denote the position of our target as a vector of coordinates $\vv{x}^{(n)}=\{x_t^{(n)},y_t^{(n)},z_t^{(n)}\}$ at time $n$, where $n\in\mathcal{N}$ and $N=|\mathcal{N}|$ is the length of the time window. The motion trajectory of the anchor point is defined as the set of positions $ \vv{x}_d^{(n)}=\{x_d^{(n)},y_d^{(n)},z_d^{(n)}\}$ at each time $n\in\mathcal{N}$. 
We can rewrite Eqs.~\eqref{eq:optfunction} and \eqref{eq:function} with the ensuing equation\footnote{To avoid notation clutter, hereafter we refer to the error function  $f^{(n)}\!\!\left(\vv{x}^{(n)}\right)=f^{(n)}\!\!\left(x_d^{(n)},y_d^{(n)},z_d^{(n)}\right)$ as $f^{(n)}$.}:
\begin{equation}
f^{(n)}\!\!=\!\sqrt{\!\left(\! x_t^{(n)}\!\!-\!\!x_d^{(n)}\!\right)^2\!\!\!\!+\!\!\left(\! y_t^{(n)}\!\!-\!\!y_d^{(n)}\!\right)^2\!\!\!\!+\!\!\left(\! z_t^{(n)}\!\!-\!\!z_d^{(n)}\!\right)^2}\!\!\!-\!\hat{\gamma}^{(n)}.
\end{equation}
As per the classical trilateration technique, the objective is to minimize the function $f^{(n)}\left(x_d^{(n)},y_d^{(n)},z_d^{(n)}\right)$ that returns the error between the calculated distance $\hat{\gamma}^{(n)}$ and the real distance among the anchor point and the target every time $n$. However, in this case, only one equation is expressed per time $n$ with the objective to find the solution that provides an error equal to $0$. Unfortunately, this problem admits multiple solutions for each time $n$ as they depict the coordinates of the points that geometrically lie on a sphere with radius equal to $\hat{\gamma}^{(n)}$ centered on $\vv{x}_d^{(n)}$. 

To overcome this problem, we can minimize the sum of the Euclidean distances---namely L2-norm---between every two consecutive solutions, each of which is obtained at time $n\in\mathcal{N}$. Thus, the overall path length is minimized (\cite{humanBehaviors}).
We can formulate it with a convex optimization model:
\begin{problem}[Pseudo-trilateration]\label{problem:pseudotrilateration}
{ \begin{align}
   & \underset{\vv{x}_t^{(n)}\in\mathbb{R}^3}{\min} && \sum\limits_{n=2}^N \left\Vert \left(\vv{x}_t^{(n)} - \vv{x}_t^{(n-1)}\right)\right\Vert^2 & \nonumber\\
   & \textup{s.t.} && f^{(n)} = 0, \quad\forall n\in\mathcal{N}; \label{cst:iter} \\
   &  && \vv{x}_t^{(n)}\in\mathbb{R}^3, \quad\forall n\in\mathcal{N}. \nonumber
\end{align}}
\end{problem}
Problem~\ref{problem:pseudotrilateration} reveals its sublinearity and can be solved with commercial tool for convex optimization. We can then formulate the following theorem.
\begin{theorem}
Problem~\ref{problem:pseudotrilateration} is NP-Hard.
\end{theorem}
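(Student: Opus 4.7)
The plan is to establish NP-hardness via a polynomial-time reduction from a known NP-hard problem. Although the objective of Problem~\ref{problem:pseudotrilateration} is a convex (indeed positive semidefinite) quadratic in the stacked decision vector, each equality constraint $f^{(n)}=0$ restricts $\vv{x}_t^{(n)}$ to the sphere of radius $\hat{\gamma}^{(n)}$ centered at $\vv{x}_d^{(n)}$, i.e., a non-convex $2$-dimensional manifold in $\mathbb{R}^3$. Problem~\ref{problem:pseudotrilateration} thus falls into the class of non-convex quadratically constrained quadratic programs (QCQPs), which is well known to be NP-hard in general. This high-level observation is not enough on its own, however: we must exhibit an explicit reduction into our specific spherical-chain instance.

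The concrete steps I would carry out are: first, algebraically cast each constraint as $\lVert\vv{x}_t^{(n)}-\vv{x}_d^{(n)}\rVert^{2}=(\hat{\gamma}^{(n)})^{2}$ and collect the objective in matrix form $\vv{X}^{\top}Q\vv{X}$ with the stacked variable $\vv{X}=[\vv{x}_t^{(1)};\dots;\vv{x}_t^{(N)}]$, making the QCQP structure fully explicit. Second, choose a canonical NP-hard source problem whose binary decisions embed naturally into the geometry of the spheres, such as \textsc{Partition} or \textsc{Subset-Sum}, with each bit corresponding to a choice between two antipodal points on a given sphere. Third, construct in polynomial time an anchor trajectory $\{\vv{x}_d^{(n)}\}$ and radii $\{\hat{\gamma}^{(n)}\}$ such that the optimal value of Problem~\ref{problem:pseudotrilateration} lies below a prescribed threshold exactly when the source instance is satisfiable. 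Fourth, verify that any near-optimal solution must concentrate near the two distinguished points on each sphere, so that the decoding of a valid source-problem solution is unambiguous.

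The main obstacle is the third step, because each sphere supplies a continuous $2$-manifold of candidate points rather than a discrete binary alternative. The reduction must therefore exploit the pairwise coupling in the objective to energetically favour two antipodal points on each sphere over every other choice. A viable strategy is to design the anchor trajectory and the radii so that, conditioned on the remaining variables, the per-step minimum is attained only at two specific antipodal locations, separated from any competing configuration by a polynomially-bounded energy gap $\Delta$. A standard threshold argument based on $\Delta$ then bridges the continuous pseudo-trilateration instance and the discrete source instance, and the polynomial size of the construction delivers the claimed NP-hardness.
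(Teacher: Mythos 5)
Your overall strategy---a polynomial-time reduction \emph{from} \textsc{Subset-Sum} (or \textsc{Partition}) \emph{to} Problem~\ref{problem:pseudotrilateration}---is the logically correct direction for establishing NP-hardness, and in that respect it is sounder than the paper's own sketch, which instead maps Problem~\ref{problem:pseudotrilateration} \emph{onto} a subset-sum instance (a reduction in that direction can only place the problem inside a class, never certify hardness). Your identification of the problem as a chain-structured non-convex QCQP, and your caution that generic QCQP hardness does not automatically transfer to this special case, are both accurate.

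The genuine gap is that the reduction is announced but never constructed, and the missing step is exactly where all the difficulty lives. The objective couples only consecutive variables, so, conditioned on $\vv{x}_t^{(n-1)}$ and $\vv{x}_t^{(n+1)}$, minimizing the two terms involving $\vv{x}_t^{(n)}$ over the sphere $\lVert\vv{x}_t^{(n)}-\vv{x}_d^{(n)}\rVert=\hat{\gamma}^{(n)}$ reduces to minimizing a \emph{linear} function of the unit direction; the minimizer is therefore generically the unique radial projection of the midpoint $\tfrac12\bigl(\vv{x}_t^{(n-1)}+\vv{x}_t^{(n+1)}\bigr)$ onto that sphere, and in the sole degenerate case (midpoint equal to the sphere's center) the objective is constant over the \emph{entire} sphere. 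In neither case does the per-step conditional optimization produce the clean two-point antipodal alternative on which your encoding of a binary choice rests, so the bistability would have to be manufactured by global geometric structure, which you have not exhibited. You would further need to show that arbitrary integer weights $a_1,\dots,a_N$ can be embedded into anchor positions and radii of polynomial bit-length while preserving an energy gap $\Delta$ at least inverse-polynomial in the encoding size, and that the ``yes-instance iff optimal value below threshold'' equivalence holds in both directions. Until that gadget is built and verified, what you have is a plausible proof plan rather than a proof; as it stands it neither establishes the theorem nor reproduces the (itself quite informal) argument in the paper.
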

\begin{sketch}
The proof goes by reduction. Let us consider as input $\hat{\gamma}^{(n)}=1,\forall n\in\mathcal{N}$. Now we can find an infinite (with $N$ multiplicity) number of subsets $\vv{x}_t^{(n)}$ that satisfy the first constraint of Problem~\ref{problem:pseudotrilateration}. The new problem is to find a subset of such values with the minimum sum. This can be easily (within polynomial time) mapped onto a subset-sum problem that is known to be NP-Complete. The NP-completeness property implies that Problem~\ref{problem:pseudotrilateration} belongs to the NP class as well as to NP-hardness class. \hfill \IEEEQEDopen
\end{sketch}

With the above theorem, we prove that the solution of Problem~\ref{problem:pseudotrilateration} cannot be found within affordable time: while a low-complex heuristics can be applied to solve it, its optimality cannot be verified within polynomial time. The heuristics only solves the problem locally for each $n\in\mathcal{N}$ by finding the coordinates of the next point $\vv{x}_t^{(n)}$ with the shortest distance from the previous one $\vv{x}_t^{(n-1)}$. This results in a suboptimal global solution.

The solution of Problem~\ref{problem:pseudotrilateration} provides a set of feasible positions covered by the target within the time window $\mathcal{N}$. Note that the anchor trajectory $\mathcal{S}=\{\vv{x}_d^{(1)},\vv{x}^{(2)}_d,\ldots,\vv{x}_d^{(N)}\}$ may affect the set of solutions of the problem and, in some cases, it might result in a double optimal solution. 

For this discussion we assume full knowledge of the distance values between the anchor and the target. We discuss the 2-dimensional case, as it makes our problem tractable, however it may be extended to the 3-dimensional case by applying the same ideas. 
Let us consider a static target user, i.e., $\vv{x}_t^{(n)}=\vv{x}_t^{(m)},\forall n\neq m$ as well as a linear motion trajectory of the anchor, i.e., $y_d^{(n)}=rx_d^{(n)}+q$ with $x_d^{(n)}\neq x_d^{(n+1)}$. An example is provided in Fig~\ref{fig:2d_2points}, where two optimal solutions are depicted.
Thus, we can formulate the following lemma:
\begin{lemma}
\label{lem:bas}
Considering a static target and a linear motion trajectory of the anchor, Problem~\ref{problem:pseudotrilateration} always admits two distinct solutions. However, if the static target lies on the motion trajectory of the anchor, Problem~\ref{problem:pseudotrilateration} admits one single solution with double multiplicity.
\end{lemma}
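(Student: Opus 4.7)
The plan is to exploit the geometric structure of the constraint $f^{(n)}=0$ together with the reflection symmetry induced by the collinearity of the anchor positions. In two dimensions, $f^{(n)}=0$ forces $\vv{x}_t^{(n)}$ to lie on the circle $C_n$ of radius $\hat{\gamma}^{(n)}$ centered at $\vv{x}_d^{(n)}$, so the feasible set of Problem~\ref{problem:pseudotrilateration} is the Cartesian product $C_1\times\cdots\times C_N$. Since the objective $\sum_{n=2}^{N}\|\vv{x}_t^{(n)}-\vv{x}_t^{(n-1)}\|^2$ is nonnegative and vanishes \emph{precisely} on constant sequences, optimal solutions correspond one-to-one to elements of $\bigcap_{n=1}^{N}C_n$, and the lemma reduces to a statement about the cardinality of this intersection.

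First I would exhibit two natural elements of that intersection. The true target $\vv{x}_t$ satisfies $\|\vv{x}_t-\vv{x}_d^{(n)}\|=\hat{\gamma}^{(n)}$ for every $n$, hence $\vv{x}_t\in\bigcap_n C_n$. Let $\vv{x}_t'$ denote its mirror image across the anchor trajectory line $L:y=rx+q$. Because reflection across $L$ is an isometry that fixes every point of $L$, and by hypothesis each $\vv{x}_d^{(n)}$ lies on $L$, we obtain $\|\vv{x}_t'-\vv{x}_d^{(n)}\|=\|\vv{x}_t-\vv{x}_d^{(n)}\|=\hat{\gamma}^{(n)}$, so $\vv{x}_t'\in\bigcap_n C_n$ as well.

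Next I would show that no other feasible point survives. The assumption $x_d^{(n)}\neq x_d^{(n+1)}$ guarantees that any two distinct circles of the family have distinct centers, so they meet in at most two points. When $\vv{x}_t\notin L$, the points $\vv{x}_t$ and $\vv{x}_t'$ are distinct and already saturate this bound, forcing $\bigcap_n C_n=\{\vv{x}_t,\vv{x}_t'\}$ and yielding the two announced optimal solutions. When $\vv{x}_t\in L$, we have $\vv{x}_t=\vv{x}_t'$; any hypothetical second intersection point would be equidistant from two distinct collinear anchors with the same distances as $\vv{x}_t$, hence equal to the reflection of $\vv{x}_t$ across $L$, i.e., $\vv{x}_t$ itself. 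Consequently every pair of circles is tangent at $\vv{x}_t$ and the intersection collapses to a single point counted with multiplicity two.

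The principal obstacle is bookkeeping rather than conceptual: one must verify that the two-circle bound propagates to the whole family $\{C_n\}$ (which is immediate since $\{\vv{x}_t,\vv{x}_t'\}$ lies in every pairwise intersection), and one must exclude spurious optima that ``hop'' between $\vv{x}_t$ and $\vv{x}_t'$ at different time steps, which is ruled out by the strict positivity of the shortest-path objective on any non-constant feasible sequence.
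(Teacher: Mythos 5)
Your proof is correct, and it reaches the same two solutions as the paper but by a genuinely different route. The paper's proof is computational: it writes the difference of two circle equations for two anchor positions on the line $y=rx+q$, solves the resulting system explicitly, and reads off the double root $\bigl(x_t,\,y_t\bigr)$ and $\bigl(-\tfrac{2qr+r^2x_t-2ry_t-x_t}{r^2+1},\,\tfrac{2q+r^2y_t+2rx_t-y_t}{r^2+1}\bigr)$ --- which is precisely the reflection of the target across the trajectory line, though the paper never names it as such. You instead obtain both solutions by a symmetry argument (reflection across $L$ is an isometry fixing all anchors, hence preserves all measured distances) and bound the solution count by the elementary fact that two circles with distinct centers meet in at most two points; the collinear case follows because the intersection set is invariant under the reflection and the on-line solution is a fixed point. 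Your approach buys two things the paper's terse computation does not: it makes explicit the reduction from Problem~\ref{problem:pseudotrilateration} to the circle-intersection problem (the objective is nonnegative and vanishes exactly on constant feasible sequences, and a constant feasible sequence exists because the noiseless target is one), and it rules out the ``hopping'' sequences that the paper silently ignores. The paper's computation, in exchange, delivers the closed-form coordinates of the mirror solution, which are reused implicitly in the discussion of Theorem~\ref{th:close}. One cosmetic caveat: your claim that \emph{any} two circles of the family have distinct centers is only guaranteed for consecutive indices by the hypothesis $x_d^{(n)}\neq x_d^{(n+1)}$; this is harmless since a single pair with distinct centers already yields the at-most-two bound on the full intersection.
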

\begin{proof}
Let us consider two different positions of the anchor $\vv{x}_d^{(n)}$ at any time $n,m$ where $n\neq m$. Using Eq.~\eqref{eq:multi-basis}, we can derive the following set of equations
\begin{align}
(x\!\!-\!\!x_d^{(n)}\!)^2\!\!\!+\!(y\!-\!(rx_d^{(n)}\!\!\!+\!\!q))^2\!\!-\!(x_t\!-\!x_d^{(n)}\!)^2\!\!-\!(x_t\!\!-\!(rx_d^{(n)}\!\!\!+\!\!q))^2\!=\!0 \nonumber \\
\label{eq:doublert} (x\!-\!x_d^{(m)}\!)^2\!\!\!+\!(y\!-\!(rx_d^{(m)}\!\!\!+\!\!q))^2\!\!\!-\!(x_t\!\!-\!\!x_d^{(m)}\!)^2\!\!\!-\!(x_t\!\!-\!(rx_d^{(m)}\!\!\!+\!\!q))^2\!\!=\!\!0
\end{align}
that provides a double root $x=[x_t\,\, -\!\frac{2qr + r^2x_t - 2ry_t - x_t}{r^2 + 1} ]$ and $y=[y_t\,\, \frac{2q + r^2y_t + 2rx_t - y_t}{r^2 + 1}]$. Clearly, when the target position lies on the motion trajectory, i.e., $y_t\!\! =\! rx_t\!+\!q$, the double root results in $x\!=\![\!x_t\,\, y_t\!]$ and $y\!=\!\![\!y_t\,\, x_t\!]$. This proves the lemma.
\end{proof}
Graphically, we show the double solution (intersection point) of Eqs.~\eqref{eq:doublert} in Figs.~\ref{fig:2d_1point}-\ref{fig:2d_2points} wherein each circumference represents one equation.

Now let us consider a nomadic target. From Lemma~\ref{lem:bas}, we can claim the following theorem.
\begin{theorem}
\label{th:close}
Problem~\ref{problem:pseudotrilateration} admits a double solution iff the motion trajectory $\mathcal{S}=\{\vv{x}_d^{(1)},\vv{x}_d^{(2)},\cdots,\vv{x}_d^{(N)}\}$ of the anchor is linear. 
\end{theorem}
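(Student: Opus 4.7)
The plan is to prove both directions of the iff by exploiting the reflection symmetry of Euclidean distance across the anchor's motion line, extending the double-root argument used in Lemma~\ref{lem:bas} from the static to the nomadic-target case. Throughout I work in the planar setting consistent with the preceding discussion.

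For the sufficiency direction, I would assume the anchor trajectory $\mathcal{S}$ lies on a line $\ell$ and let $\sigma_\ell$ denote reflection across $\ell$. For any feasible target trajectory $(\vv{x}_t^{(n)})_n$, the pointwise reflection $(\sigma_\ell(\vv{x}_t^{(n)}))_n$ is also feasible because $\sigma_\ell$ fixes every $\vv{x}_d^{(n)}\in\ell$ pointwise, so $\|\sigma_\ell(\vv{x}_t^{(n)}) - \vv{x}_d^{(n)}\| = \|\vv{x}_t^{(n)} - \vv{x}_d^{(n)}\| = \hat{\gamma}^{(n)}$, which preserves the constraint $f^{(n)}=0$. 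Since $\sigma_\ell$ is an isometry of the plane, the sum of squared consecutive distances in the objective of Problem~\ref{problem:pseudotrilateration} is likewise preserved. Hence any optimal trajectory pairs with a distinct reflected optimum, except when every $\vv{x}_t^{(n)}$ already lies on $\ell$, which reproduces the double-multiplicity degeneracy of Lemma~\ref{lem:bas}.

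For the necessity direction, I would proceed by contrapositive and show that a non-linear $\mathcal{S}$ forces a unique optimum. Suppose $T^\star \neq \tilde{T}^\star$ are both optimal trajectories; at each time $n$, both $\vv{x}_t^{(n)}$ and $\tilde{\vv{x}}_t^{(n)}$ lie on the circle $C^{(n)}$ centered at $\vv{x}_d^{(n)}$ with radius $\hat{\gamma}^{(n)}$. My target is to exhibit a planar isometry $\phi$ mapping $T^\star$ to $\tilde{T}^\star$; such a $\phi$ must carry each $C^{(n)}$ to itself, which forces $\phi(\vv{x}_d^{(n)}) = \vv{x}_d^{(n)}$ for every $n$. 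Because an isometry of $\mathbb{R}^2$ fixing three non-collinear points is necessarily the identity, $\phi$ would be trivial whenever $\mathcal{S}$ fails to be contained in a line, contradicting $T^\star \neq \tilde{T}^\star$.

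The hardest step is justifying that two distinct optima are actually related by a single global $\phi$ rather than by uncoordinated local coincidences. I plan to handle this locally: with the neighbors $\vv{x}_t^{(n-1)}$ and $\vv{x}_t^{(n+1)}$ held fixed, the two choices of $\vv{x}_t^{(n)}\in C^{(n)}$ that yield equal objective contributions must be related by reflection across the line through $\vv{x}_d^{(n-1)}$ and $\vv{x}_d^{(n+1)}$, via exactly the double-root computation that produced Eq.~\eqref{eq:doublert} in Lemma~\ref{lem:bas}. The remaining delicate task is gluing these per-step reflection axes into one global reflection; this consistency across all $n$ is precisely the collinearity of the full anchor trajectory, completing the contrapositive and hence the theorem.
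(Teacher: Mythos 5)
Your sufficiency direction is sound and is essentially a cleaner formalization of what the paper does: the paper's proof also rests on the observation that the per-time-step solutions ``exhibit the symmetry property with respect to the anchor trajectory,'' and your global reflection $\sigma_\ell$ packages that symmetry correctly, including the degenerate single-solution-with-double-multiplicity case when the target lies on $\ell$. That half is fine.

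The necessity direction, however, has a genuine gap that you yourself flag but do not close, and the local argument you propose to close it with does not work. First, there is no justification for the premise that two distinct optimal trajectories $T^\star\neq\tilde T^\star$ must be related by a \emph{single global planar isometry} $\phi$; two minimizers of a sum of squared consecutive distances subject to per-step circle constraints could agree at some indices and differ at others, or differ in ways not realized by any rigid motion, so the ``isometry fixing three non-collinear points is the identity'' step never gets off the ground. Second, the proposed local pairing is false: holding $\vv{x}_t^{(n-1)}$ and $\vv{x}_t^{(n+1)}$ fixed, the contribution $\|\vv{x}-\vv{x}_t^{(n-1)}\|^2+\|\vv{x}-\vv{x}_t^{(n+1)}\|^2$ restricted to the circle $C^{(n)}$ reduces (after substituting $\|\vv{x}-\vv{x}_d^{(n)}\|^2=\bigl(\hat\gamma^{(n)}\bigr)^2$) to an affine function of $\vv{x}$, which has a \emph{unique} minimizer on $C^{(n)}$ except in the degenerate case where $\vv{x}_d^{(n)}$ is the midpoint of the two neighbors; there are no two equal-cost choices to reflect into one another, and any symmetry axis would in any case pass through the circle's center $\vv{x}_d^{(n)}$, not through $\vv{x}_d^{(n-1)}$ and $\vv{x}_d^{(n+1)}$ as you state. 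To be fair, the paper's own proof only argues the symmetry (sufficiency) side and asserts the rest, so you are not missing something the authors supply; but as written your contrapositive does not establish the ``only if'' claim.
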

\begin{proof}
Similarly to the proof of Lemma~\ref{lem:bas}, we write and solve Eqs.~\eqref{eq:doublert} for each position of the target. The solutions (single or double) of each system of equations exhibit the symmetry property with respect to the anchor trajectory. If the anchor is moving along the same direction of the target, the solution of each system is single with double multiplicity, as previously proved.
\end{proof}
A graphical illustration is provided in Figs.~\ref{fig:2d_1point_mob}-\ref{fig:2d_2points_mob}. As expected, when the anchor (blue dot) is moving along the direction of the target (red dot), the solution of Problem~\ref{problem:pseudotrilateration} reveals the exact position of the target (with double multiplicity). Conversely, when the motion trajectory is different from the direction of the target but still linear, the optimization problem results in two distinct optimal solutions. 

Analogously, in 3-dimensional space the motion trajectory affects the solution of Problem~\ref{problem:pseudotrilateration}. In this case, when the anchor direction is linear and the target position is static, there are infinite optimal solutions that geometrically lie on the circumference centered on the motion trajectory line, orthogonal to it and passing through the target position, as depicted in Fig.~\ref{fig:3d_1circ}. When the target moves, all solutions lie on analogously defined circumferences 
as depicted in Fig.~\ref{fig:3d_1circ_mob}. 

As a result, to avoid ambiguity in the solution of Problem~\ref{problem:pseudotrilateration}, based on Theorem~\ref{th:close} the anchor trajectory must change its direction within a finite time. Nevertheless, this may cause the mobile anchor to get far away from the target reducing the receive signal-to-noise ratio and so the accuracy. Therefore, we rely on \emph{a close trajectory of the mobile anchor
}. For the sake of simplicity, we model the anchor trajectory as \textit{circular}.

While considering shortest paths in our calculation may appear reasonable, in this work we go one step beyond by proposing in the next section an AI-based localization system that applies the pseudo-trilateration method, automatically learns the common motion behaviors and calculates the positions with high accuracy.

\section{\mbox{AI-Based Enhanced localization}}
\label{s:aicnn}
The pseudo-trilateration method is based on the usage of a single anchor node, e.g. a UAV flying along a predefined motion trajectory locating a target user under its cellular coverage, as detailed in Section~\ref{s:implementation}. For the sake of simplicity, we assume that the UAV keeps flying on a certain fixed altitude\footnote{While this assumption complies with national regulations, it may be simply relaxed by adding the UAV altitude to the retrieved information.}. 

We develop a machine learning framework to solve the positioning problem, that is to infer the position of the target, dubbed as UE according to the LTE nomenclature, given the raw ranging measurement series. In the following, we provide some insights on the measurement data preprocessing, the Convolutional Neural Network (CNN) design as well as on its training dataset. 

\subsection{Data preprocessing}
\label{s:data_preprocessing}
Let us consider any close trajectory along which the UAV may fly during the ToF measurements collection. For the sake of tractability, we assume that the UAV takes measurements in a subset of equally-spaced points on its trajectory, namely measurement spots. Moreover, we assume that all measurements are taken at the same time instant, meaning that there is no time delay between any measurement in the same spot\footnote{This assumption makes our analysis tractable. However, in practice the ToF measurements are performed within few milliseconds, as shown in Section~\ref{s:results}.}. Let us denote as $N$ the number of measurement spots and $L$ the number of measurements taken for each of them. The set of measurements taken by the UAV during a single revolution is $\hat{\gamma}^{(n)}_l$, $\forall n \in \mathcal{N}, l \in \mathcal{L}$, with  $N = |\mathcal{N}|$ and $L = |\mathcal{L}|$. For each revolution, we arrange the measurement series into a 2D data structure where the time and spatial relationships among subsequent measurements may be better represented. In particular, we build a matrix $\Gamma_m$, $\forall m \in \mathcal{M}$, of size $N \times L$, where $\mathcal{M}$ is the set of all close trajectories. To take into account the UAV trajectory, we concatenate each $\Gamma_m$ with a matrix $D_m$ that contains the 3D coordinates of all UAV measurements spots, as depicted in the following. We name the above mentioned matrix $\Phi_m$ and express it as follows 
\begin{align}
\label{eq:phi}
\Phi_m = \begin{bmatrix}
\hat{\gamma}^{(1)}_1 & \dots  & \hat{\gamma}^{(1)}_L & x_1 & y_1 & z_1\\
\hat{\gamma}^{(2)}_1 & \dots  & \hat{\gamma}^{(2)}_L & x_2 & y_2 & z_2\\
\vdots & \ddots & \vdots & \vdots & \vdots & \vdots\\
\bundermat{$\Gamma_m$}{\hat{\gamma}^{(N)}_1 & \dots  & \hat{\gamma}^{(N)}_L} & \bundermat{$D_m$}{x_{N} & y_{N} & z_{N}} 
\end{bmatrix}.
\end{align} 
\vspace{4mm}

\noindent To understand the rationale behind our neural network design, let us assume a scenario wherein the target UE is partially covered by a fixed obstacle, e.g. rubble. As shown in Fig.~\ref{fig:data_sample}, the heatmap of the corresponding $\Gamma_m$ presents a stripe pattern due to the excess error at the measurements spots where the wireless channel suffers from the shadowing effect induced by the obstacle.  
\begin{figure}[t!]
    \centering
    \subfigure
    {
       \centering
        \includegraphics[clip, width=0.24\textwidth]{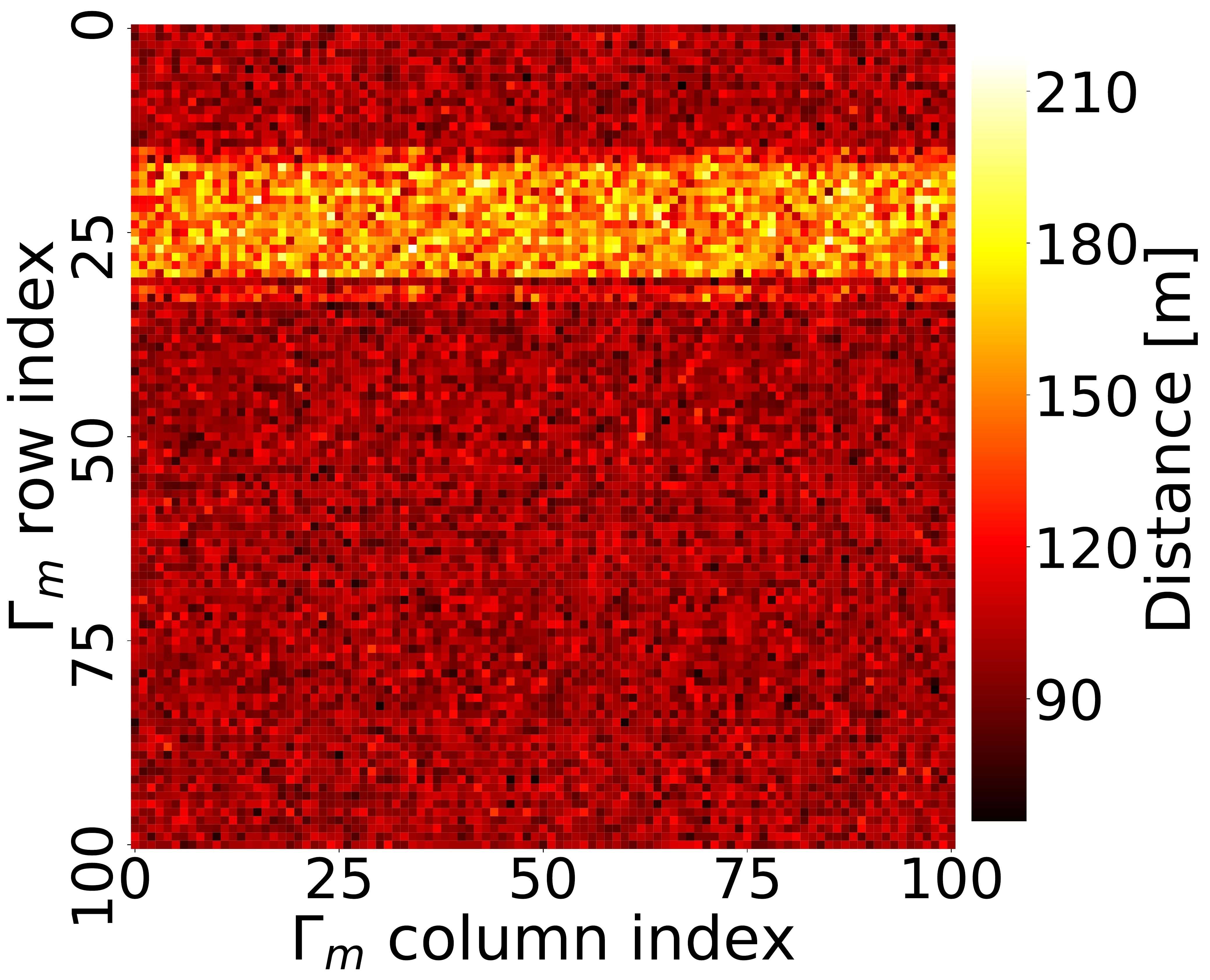}
    }
    \subfigure
    {
       \centering
        \includegraphics[clip, width=0.19\textwidth]{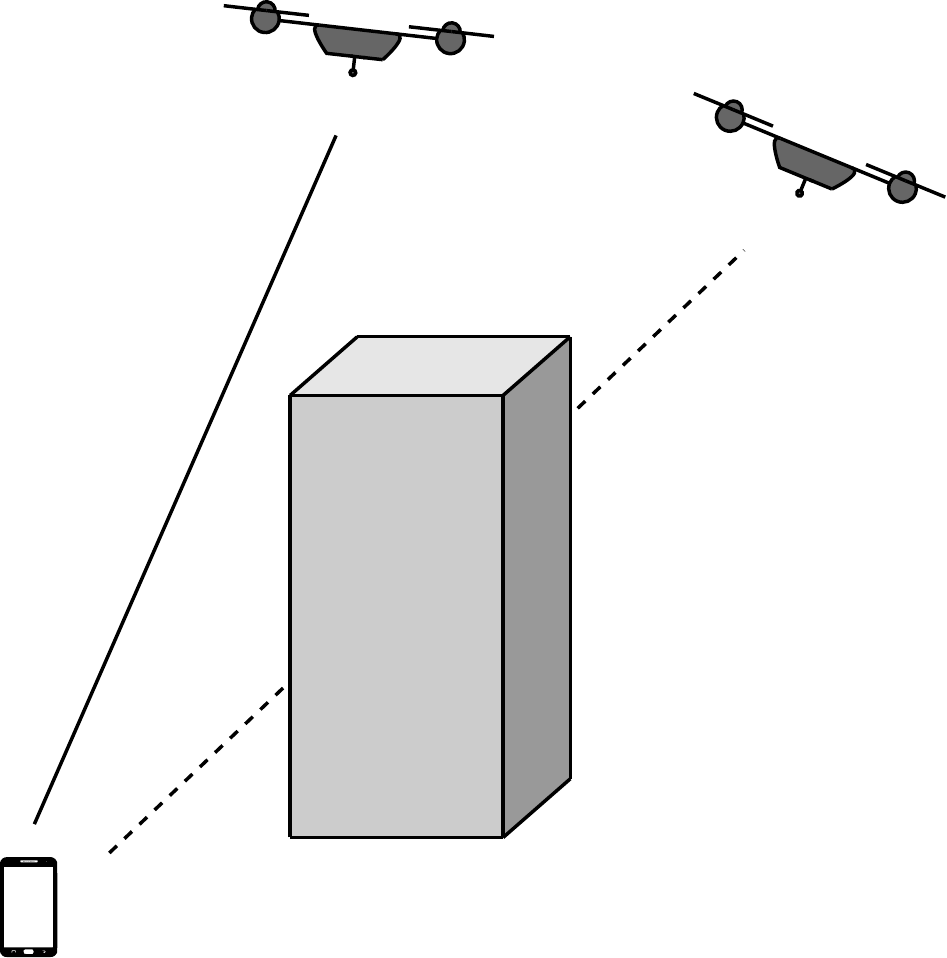}
    }
    \caption{Heatmap of matrix $\Gamma_m$ of distance measurements in a single-obstacle scenario.}
    \label{fig:data_sample}
\end{figure}
Different propagation environments lead to different patterns in $D_m$ thereby suggesting to process each corresponding $\Phi_m$ as a single-channel picture. In such regard, we employ a 2D CNN whose usage is well established in the image processing field, e.g. for target recognition~\cite{Zhao2018}.
\subsection{2D CNN design and training}
\label{s:cnn}
2D CNNs are a class of feed-forward neural networks that make use of convolutional layers to extract information from 2D input data. Generally, they consist of a bunch of modules that include one convolutional and one pooling (or subsampling) layers, and they may be repeated to build a deeper model. In addition, some fully connected layers are stacked onto the last module to provide the final output of the network. Although CNNs are very much used to perform classification tasks, we employ them to solve our localization problem. We train the network to infer the user trajectory, given $5000$ input data samples. In particular, for each $\Phi_m$, we train the network to regress $U_m$, which denotes the matrix of the 2D user coordinates $x_t^{(n)}$, $\forall n \in \mathcal{N}$ corresponding to each UAV measurement spot. We denote as $\hat{U}_m$ the matrix of the regressed user coordinates $\hat{x}_t^{(n)}$, $\forall n \in \mathcal{N}$. 
 
\begin{figure}[t!]
      \centering
      \includegraphics[clip, width=0.49\textwidth ]{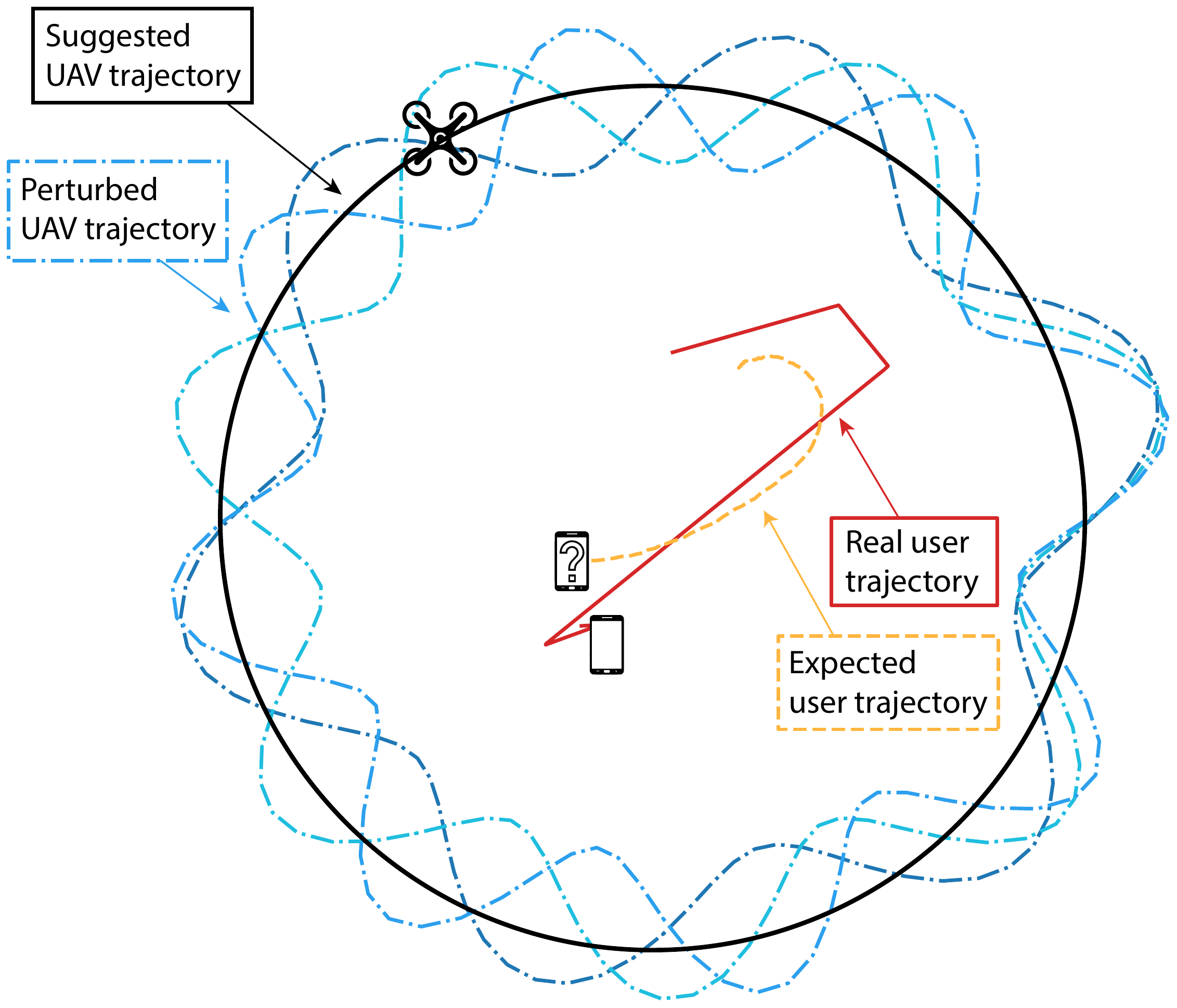}
      \caption{Real example of localization error on simulated data applying the pseudo-trilateration mechanism considering UAV circular perturbed trajectory.}
      \label{fig:trajectories}
\end{figure}

Our CNN design involves two modules and three fully connected layers. During the training phase, we aim at offering the network a broad range of labeled samples. To do so, we simulate several UAV and user trajectories, and generate a synthetic dataset as described in Section~\ref{s:data_preprocessing}. In Section~\ref{s:pseudo-tri}, we have shown that a close line satisfies Theorem~\ref{th:close}. Indeed, to take into account any possible perturbation of the imposed UAV trajectory, we simulate UAV trajectories considering sinusoidal characteristics~\cite{unmannedAircraft_book} according to the following equation:
\begin{equation}
\label{eq:trajectory}
\begin{split}
\vv{x}_d^{(n)} = \{(\rho + asin(2 \pi n/ N)) \, cos(2 \pi/N) + x_c, \\
(\rho + asin(2 \pi n/ N)) \, sin(2 \pi / N) + y_c, h\},
\end{split}
\end{equation}
$\forall n \in \mathcal{N}$, where $\{x_c,y_c,h\}$ and $\rho$ denote the coordinates of its center and radius, respectively, and $a \in \mathbb{R}$ is a parameter such that $0<a<\rho$. Some examples of such trajectories are shown in Fig.~\ref{fig:trajectories}. Besides, to simulate realistic user trajectories, we employ the so called SLAW mobility model~\cite{SLAW}.

\subsection{Encoder-Decoder LSTM design and training}
\label{s:lstm}

We set the properties of the next UAV trajectory on the base of the predicted future user behavior. In order to perform this prediction, we employ a multi-layered Long Short-Term Memory (LSTM) that takes as input the current 2D-CNN output, namely the estimated current user trajectory, and outputs its expected evolution. This problem is usually denoted as sequence-to-sequence prediction problem.

Our method is based on the so-called Encoder-Decoder LSTM Recurrent Neural Network~\cite{GoogleLSTM}. The core idea is to map the input sequence to a fixed-length vector using an LSTM, which is the Encoder, and then map the latter vector onto the target sequence, which is the Decoder LSTM. The Encoder and the Decoder LSTMs are then followed by a fully connected feed-forward layer, which represents the output layer. In this way, the network creates an internal fixed-dimensional vector representation of the input sequence and learns how to generate an output sequence of the same or different length, namely the prediction.

Specifically, we train the network with a dataset made of input-output pairs $(U_m,F_m)$, $\forall m \in \mathcal{M}$, where $F_m$ is the matrix containing the future user coordinates, that is $\vv{x}^{(f)}$, $\forall f \in \mathcal{F}$, with $F = |\mathcal{F}|$. We name $\hat{F}_m$ the matrix containing the predicted user coordinates $\hat{\vv{x}}^{(f)}$, $\forall f \in \mathcal{F}$.


\section{Dynamic UAV Relocation}
\label{s:controller}

\begin{algorithm}[t]
\centering
\begin{framed}
\begin{enumerate}[leftmargin=*]
\small
\item Initialise the offset values $\delta_x,\delta_y,\delta_\rho$ to $0$.
\item Update the UAV trajectory parameters with the following equations $x_c+\!\!=\delta_x,\,\,y_c+\!\!=\delta_y,\,\,\rho+\!\!=\delta_\rho$.
\item Calculate the spatial average $\bar{\vv{x}_t}=\{ \bar{x}_t,\bar{y}_t\}$ of the predicted user positions $\hat{\vv{x}_t}^{(f)} =\{\hat{x}_t^{(f)},\hat{y}_t^{(f)}\}$ with the following
\[
\bar{x}_t = \frac{1}{F} \sum\limits_{f=1}^{F} \hat{x}_t^{(f)}, \quad\bar{y}_t = \frac{1}{F} \sum\limits_{f=1}^{F} \hat{y}_t^{(f)}.
\]
\item Calculate the average predicted speed of the user as follows
\[
\bar{v}_t = \frac{1}{F-1}\sum\limits_{f=2}^F ||\hat{\vv{x}}_t^{(f)}-\hat{\vv{x}}_t^{(f-1)}||^2.
\]
\item Set $\delta_x\! =\! x_c-\bar{x}_t \quad \delta_y\! =\! y_c-\bar{y}_t \quad \delta_\rho\!=\!\frac{\bar{v}_t}{\bar{v}_d}\max\limits_{f\in\mathcal{F}} ||\vv{x}_t^{(f)}-\bar{\vv{x}}_t||^2$.
\item Go to Step 2.
\end{enumerate}
\end{framed}
\caption{UAV Relocation procedure}
\label{algo:reloc}
\end{algorithm}

Predicted user positions are used to improve the localization process. After a complete revolution, \name{} automatically adjusts the UAV position to get closer to the user so as to retrieve more accurate distance measurements. To make the adjustment process simple, we assume that the UAV is instructed to change the radius $\rho$ of the circular motion trajectory by an offset, namely $\delta_{\rho}$ and the position of its center $\vv{x}_c$ by a space offset, namely $\delta_x,\delta_y$ (and $\delta_z$ in case of $3-$dimensional scenarios). The pseudo-code is listed in Algorithm~\ref{algo:reloc}. Predicted user positions $\hat{\vv{x}}_t^{(f)}$ within the future time window $\mathcal{F}$ are retrieved from the Encoder-Decoder LSTM block, as explained in Section~\ref{s:lstm}. Such coordinates are spatially averaged to get the next center value of the UAV trajectory $x_c,y_c$. The maximum distance from that depicts the radius of the UAV trajectory. Additionally, we compute a confidence value that depends on the user average speed $\bar{v}_t$ and the UAV speed $\bar{v}_d$: if the user is moving faster our algorithm adds a safety margin to the radius offset $\delta_\rho$ to keep the user close to the UAV trajectory coverage.

\begin{figure}[t!]
      \centering
      \includegraphics[clip, width=0.9\linewidth ]{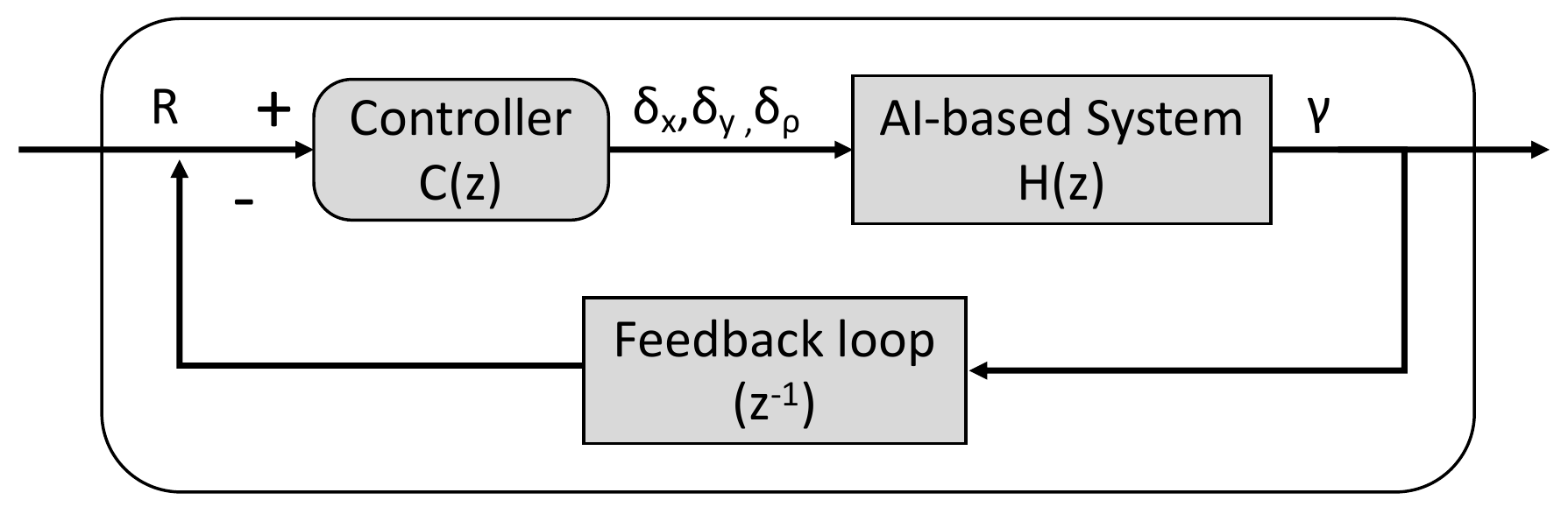}
      \caption{Building blocks of the controller designed to automatically adjust the UAV position.}
      \label{fig:controller}
\end{figure}
Recalling from control theory, our controller aims at reducing the difference between the reference signal $R(t)$ and the output signal $O(t)$. Specifically, in our case we design the output signal from the overall system as the distance between the UAV and the user, i.e., $\gamma^{(n)}$ whereas the reference is set to $0$, as shown in Fig.~\ref{fig:controller}. Assuming our system running in a discrete domain, we can write the $z$ transform of the controlled system as the following\footnote{The transfer function is linearized at (0,0).
}
\begin{align}
&H(z)\!=\!2\delta_x(x_c\!+\!(\rho\!+\!\delta_\rho\!)cos\omega\!-\!\bar{x}_t)
\!\!+\!\!2\delta_y(y_c\!+\!(\rho\!+\!\delta_\rho)sin\omega\!-\!\bar{y}_t)\nonumber\\
&+2\delta_\rho((\rho\!+\!\delta_\rho\!+\!(x_c\!+\!\delta_x\!-\!\bar{x}_t)cos\omega\!+\!(y_c\!+\!\delta_y\!-\!\bar{y}_t)sin\omega
\end{align}
whereas the PI controller\footnote{We design a Proportional-Integral (PI) controller due to its simplicity while guaranteeing zero error in the steady-state.} is defined as $C(z) = K_p + \frac{K_i}{z-1}$. Following the Ziegler-Nichols rules~\cite{franklin1990}, we can calculate $K_p=0.1$ and $K_i=0.11$ to keep our system stable.

Considering multiple UAV positions $n\in\mathcal{N}$, the controller adaptively tunes the parameters to reduce the distances $\gamma^{(n)},\forall n\in\mathcal{N}$ that in turn translates in having the UAV trajectory accurately covering the expected positions of the user $\hat{x}_t,\hat{y}_t$. Specifically, when the user moves around certain points showing a limited motion area, our UAV tries to reduce the circular trajectory area, i.e., reduces $\delta_\rho$ to focus and retrieve higher accurate measurements (in the extreme case when the user is static, the UAV is covering the minimum trajectory area). Conversely, when the user moves over a larger area the controller increases the coverage area trying to keep the user within the close trajectory. In Section~\ref{s:results}, we show the performance results in terms of system stability over time and we finally show the overall system performance (running our algorithm) when the feedback loop is active against different user speeds. 


\section{\name~Implementation} \label{s:implementation}
\begin{figure}[t!]
      \centering
      \includegraphics[clip, width=0.8\linewidth ]{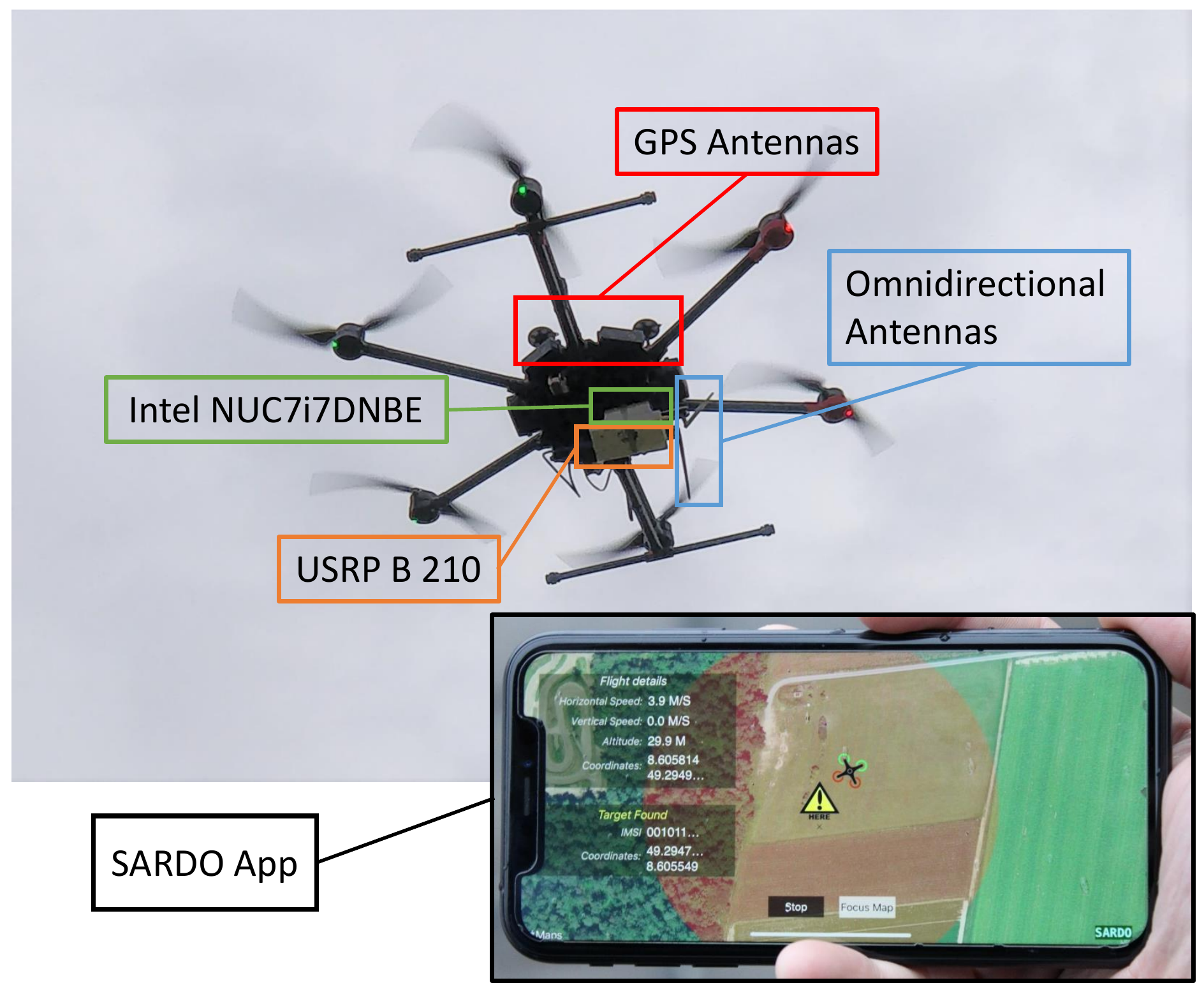}
      \caption{
      \name{} in action. Video available at \newline \url{https://www.youtube.com/watch?v=9v3NNghq3so}.}
      \label{fig:drone_app}
\end{figure}


We designed and prototyped \name{} going through an intensive engineering process resulting in the architecture summarized in Fig.~\ref{fig:architecture}. \name{} runs on off-the-shelf equipment (listed in Table \ref{tab:equipment_list}) and does not require any protocol stack change to mobile users devices. 
\begin{table}[ht]
\caption{Equipment for \name{} prototype}
\label{tab:equipment_list}
\scriptsize
\centering
\resizebox{0.4\textwidth}{!}{%
\begin{tabular}{c|c}
\textbf{Equipment} & \textbf{Model}\\  
\hline
\rowcolor[HTML]{EFEFEF}
FGPA Board  &   NI USRP B210 \\
Embedded Computer & Intel NUC7i7DNBE \\
\rowcolor[HTML]{EFEFEF}
UAV     &  DJI Matrice 600 PRO\\
\makecell{High-Gain Amplifier\\Low-Noise Amplifier} & \makecell{Mini-Circuits ZX60-V63+ \\ ZX60-33LNR-S+ \cite{amplifiers}}\\
\rowcolor[HTML]{EFEFEF}
Directional Antennas    &  $2\,\times$ $10$ dBi\\
UE &    Samsung Galaxy Tab S2 \\
\hline
\end{tabular}%
}
\end{table}

{\bf Prototype architecture.} The first building block is in charge of collecting ToF measurements from the target UE and processing them as described in Section~\ref{s:data_preprocessing}. 
This block relies on the Software Defined Radio technology building on top of srsLTE~\cite{srsLTE}, an open-source LTE-compliant software suite~\cite{srsLTEProd}---deployed on an Intel NUC Board~\cite{intelNUC} with $32$GBs of RAM and $1.9$ GHz $7$th generation CPU---that interfaces with an FPGA board, NI USRP B210~\cite{usrpb210}, equipped with omnidirectional antennas. In order to devise an all-in-one solution, we deploy the network backhaul and core domain as part of the srsLTE suite. This enables a quick and direct interaction with any single 3GPP architectural component, such as SGW, PGW, HSS or MME (hereafter described).

{\bf Testbed.} We securely set up this module of about $1$kg weight on board of an advanced DJI Matrice 600 Pro UAV~\cite{drone600pro} that is able to carry up to $6$kg payload, as shown in Fig.~\ref{fig:drone_app}. In addition, to control the UAV trajectory, we developed a control iOS application by means of the DJI Mobile SDK\footnote{The iOS application is only needed to control the UAV motion patterns while processing runtime information. However, first responders may run a standalone application that automatically triggers new UAV directions.}, which has a twofold function: $i)$ it fetches the UAV coordinates and relays them to the NUC board, where first the 2D CNN and then the Encoder-Decoder LSTM are executed to calculate the new UAV trajectory parameters according to Algorithm~\ref{algo:reloc}, $ii)$ it retrieves such new settings and calculates the next UAV positions that are set back on the UAV.
This is achieved by means of a 2.4 GHz WiFi control channel that delivers information to the Intel NUC while a proprietary DJI wireless communication interface is used by the DJI framework, i.e. by the ad-hoc iOS application, to deliver new UAV motion patterns. An anonymous online video is available at~\cite{onlinevideo}. 

To make our solution mobile infrastructure independent, as highlighted in Section~\ref{s:intro}, we design the ToF measurements processing module such that it does not require successful associations with the user equipments (UEs). 
For the sake of clarity, in the following we do not distinguish between the above mentioned module and the UAV itself.

\vspace{2mm}\subsection{Mobile Infrastructure Independence}
\name{} does not rely on the successful UE attachment thereby reducing the overall complexity of the system and improving the effectiveness of rescue operations in emergency scenarios. 

\begin{figure}[t!]
      \centering
      \includegraphics[clip, width=0.47\textwidth ]{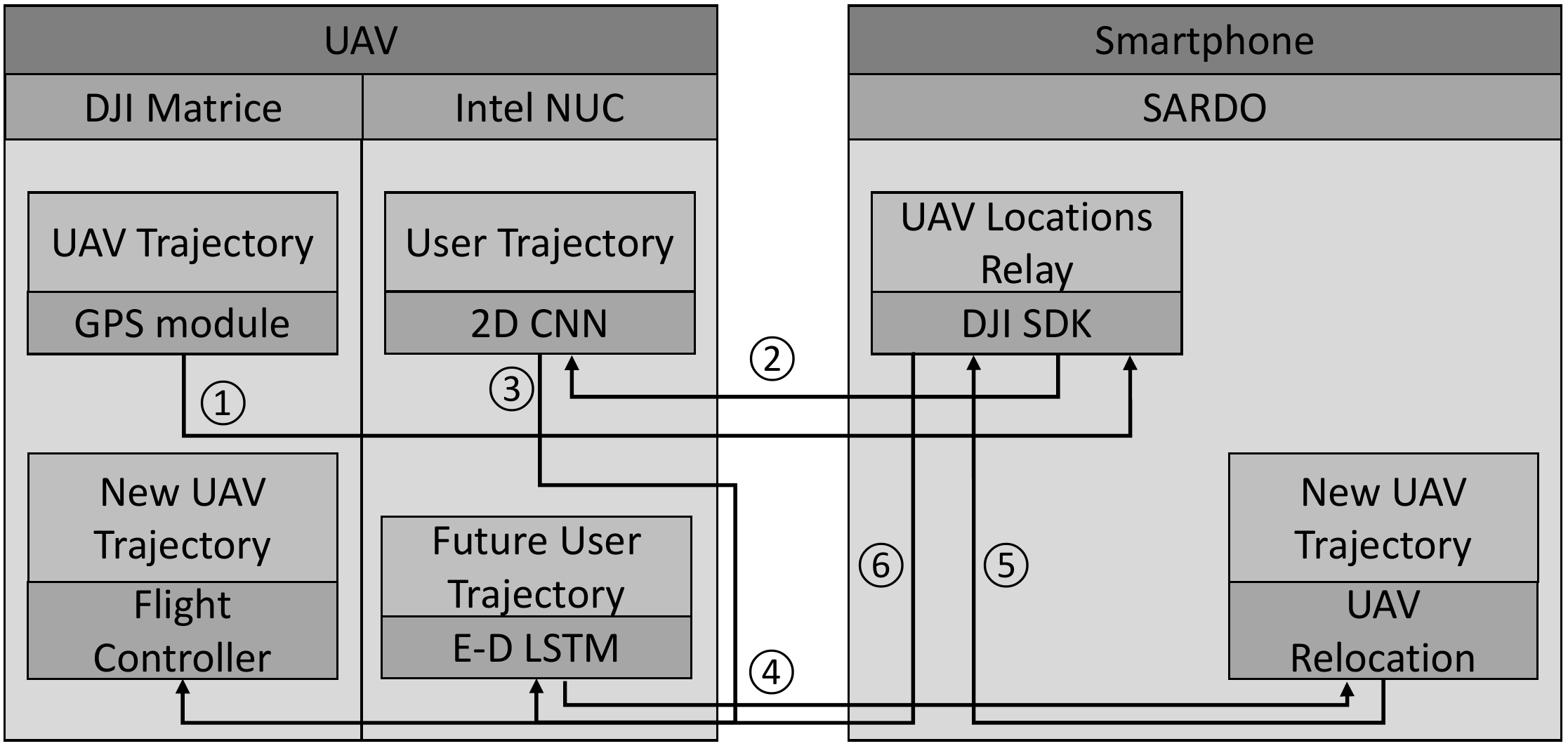}
      \caption{\name{} prototype architecture with message flow sequence numbers.}
      \label{fig:architecture}
\end{figure}

The 3rd Generation Partnership Project (3GPP) prescribes that the UE performs the Random Access Procedure (RAP) whenever it attempts to establish a connection with a base station (namely eNodeB or eNB), e.g., initial access to the network or handovers~\cite{3gpp_IDLE}. However, the completion of the RAP does not imply that the UE is attached to the eNB. Indeed, the UE establishes a Radio Resource Control (RRC) connection with the selected eNB to access the required network resources. In particular, the RRC layer is responsible for radio resource configuration and mobility management of connected UEs. In addition, RRC serves as transport protocol for Non-Access Stratum (NAS) signaling messages between a UE and its Mobile Management Entity (MME). 

We introduce specific changes to the srsLTE software but keeping our system in full compliance with 3GPP standard guidelines.
We update the NAS signaling\footnote{\name{} achieves the disclosure of the UE identity relying on a design choice of the current LTE standard. However, next cellular network generations may natively provide such information.} for the Tracking Area Update (TAU) Procedure so that the UE reveals its International Mobile Subscriber Identity (IMSI) while exchanging messages with the base station, i.e., our UAV cell. In this way, it is straightforward to identify the UE and start the localization process within a short disruption time window. Such an identity-awareness feature opens up new use cases, e.g. searching for specific missing people or locating specific targets for public safety purposes.
%
Thus, our ToF measurements processing module is built as an IMSI-catcher~\cite{IMSI_catcher}. Hereafter, we detail the minimal steps of our approach. 

\begin{figure}[t!]
      \centering
      \includegraphics[clip, width=0.47\textwidth ]{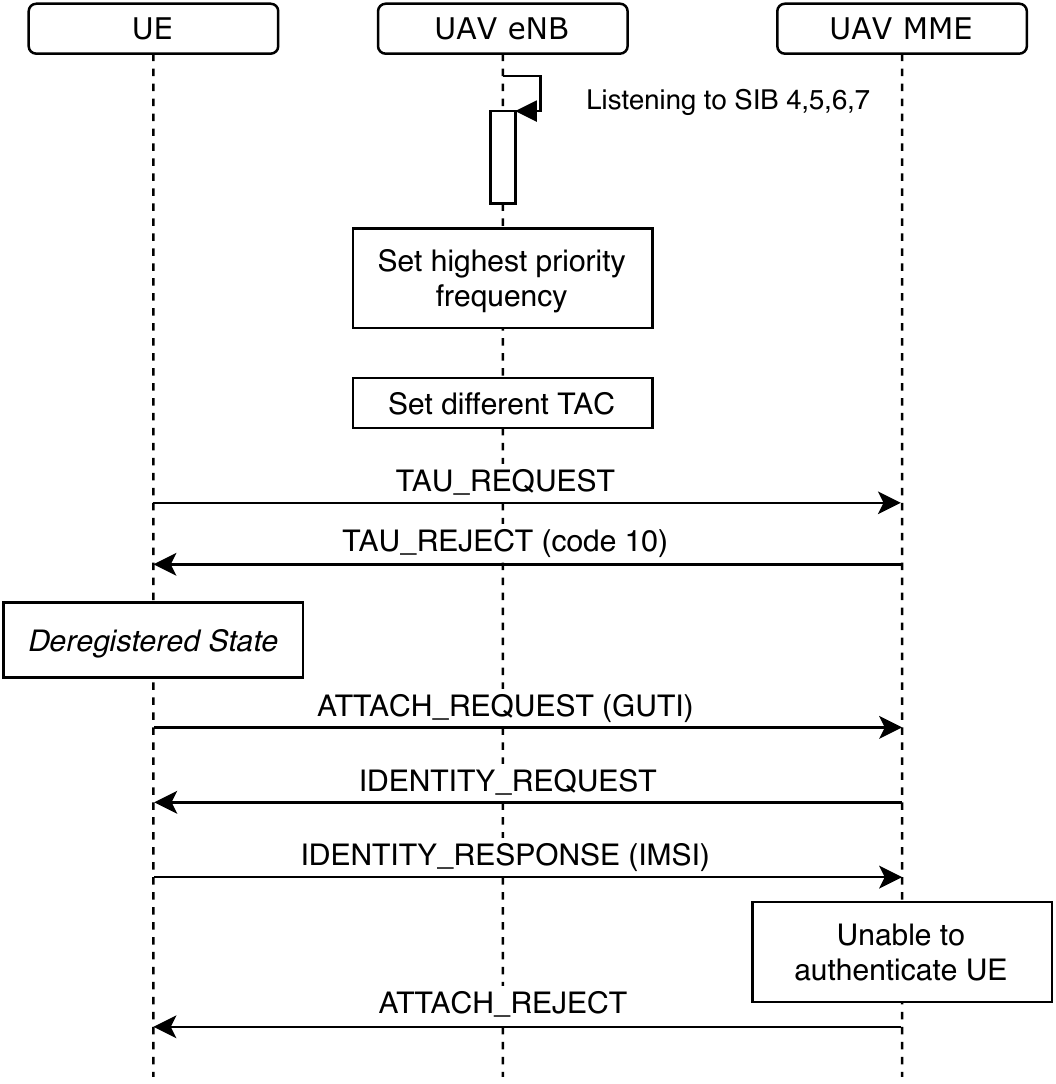}
      \caption{IMSI-Catcher message sequence chart }
      \label{fig:IMSI_catcher_chart}
\end{figure}
{\bf IMSI-catcher.} The message sequence chart is depicted in Fig.~\ref{fig:IMSI_catcher_chart}. The UAV listens to System Information Block (SIB) messages transmitted by existing ground base stations in the disaster area. SIB Type 4, 5, 6 and 7 messages carry the parameters of the Cell Reselection Procedure. Such procedure aims at moving the UE to the best cell of the selected operator. In LTE, this is accomplished by letting the UE assess all the frequencies and Radio Access Technologies (RATs) based on the priority list specified in the above mentioned SIB messages and then rank them according to the respective radio link quality. The TAU procedure is triggered as soon as the UE selects a new Tracking Area Code (TAC) different from the one it is currently camping on. Therefore, we setup our UAV on the highest-priority frequency. 

This deteriorates the received power of the serving eNB, thus enabling the inter-frequency and inter-RAT Cell Search Procedure, which eventually leads to a Cell Reselection towards our UAV.
Moreover, we set a different TAC with respect to the camping one in order to trigger a \texttt{TAU\_REQUEST} from the UE. To this request, our UAV MME responds with a \texttt{TAU\_REJECT} with cause \#10: \texttt{Implicitly detached}. In this way, the UAV MME advertises that the UE is set to \textit{deregistered state} forcing the UE to perform a new Attach Procedure. The UE sends an \texttt{ATTACH\_REQUEST} message to the UAV MME containing only temporary identity information, e.g., the Globally Unique Temporary Identifier (GUTI).

Once the Globally Unique Temporary Identifier (GUTI) is received, the UAV MME sends an \texttt{IDENTITY\_REQUEST} message, notifying its inability to derive the UE IMSI from its temporary identity information. The UE responds with its real IMSI in the next \texttt{IDENTITY\_RESPONSE} message. 

After having received the IMSI, being the UAV not able to authenticate the UE because of missing authentication information in the Home Subscriber Server (HSS), the Attachment Procedure is dropped with an \texttt{ATTACH\_REJECT}. Albeit the UE fails to attach with the UAV, it exchanges many messages making use of several physical layer channels. In particular, we focus our analysis on the Physical Uplink Shared CHannel (PUSCH) on which the so-called DeModulation Reference Signal (DMRS) is periodically transmitted. This Reference Signal allows the UAV to successfully demodulate the PUSCH that, in the following, is exploited to perform UE distance measurements. 

\subsection{Tunable precision}
\label{s:xcorr}

The UAV stores its current GNSS position (e.g., GPS) information and the uplink Demodulation Reference Signal (DMRS) received from the UE\footnote{We have amended the public source code of srsLTE~\cite{srsLTEProd} to implement the IMSI-catcher procedure and the ToF calculations.}. The Time of Flight (ToF) is calculated by exploiting the ideal autocorrelation property of the sequences used for DMRS. Indeed, the DMRS uses Constant-Amplitude-Zero-Autocorrelation (CAZAC) sequences known as Zadoff-Chu (ZC)~\cite{LTE_book}. The ZC sequence of odd-length $N$ can be written as $x_q(n) = exp\left[-j2 \pi q \frac{n(n+1)/2+ln}{N}\right]$,
%
where $n = 0,1,\dots,N-1$, $l \in \mathbb{N}$ and $q \in \{1,\dots,N-1\}$ is called the ZC sequence root index. For the sake of simplicity, in LTE, $l$ is set to $0$. Moreover, for ZC sequences of any length $N$, the zero autocorrelation property holds, namely it yields the following 
\begin{equation}
R_{xx}(m) = \sum_{n=0}^{N-1} x_q(n)\,x_q^*(n+m) = \delta(m), 
\end{equation}
where $R_{xx}(m)$ and $\delta(\cdot)$ denote the discrete periodic autocorrelation function of $x_q(n)$ at lag $m$ and the Dirac delta, respectively, and $(\cdot)^*$ denotes the complex conjugate operation.

Inspired by~\cite{SkyRAN}, we consider the known and the received DMRSs in the discrete-frequency domain, namely $X(k)$ and $Y(k)$, respectively. Using the cross-correlation property of the Discrete Fourier Transform (DFT), we calculate the circular cross-correlation of the two sequences as $\text{IDFT}\{X(k)\,Y^*(k)\}$,
where IDFT$\{\cdot\}$ denotes the Inverse Discrete Fourier Transform. Thus, we look for the magnitude peak of the sequence as its position returns the delay of the received DMRS. In other words, this returns the ToF of the uplink signal transmitted by the UE. 
Regardless of the ideal autocorrelation property of the DMRS, this procedure is constrained by the sampling the frequency of the time-domain signal. Indeed, the above mentioned cross-correlation is sampled at the same sampling frequency $\Delta f$ of the original signals and the position of the peak is approximated to the closest time offset at the current sampling frequency. Therefore, its resolution depends on the time interval $\Delta t$ between two subsequent samples, that is, for an LTE signal bandwidth of $20$ MHz and sampling frequency of $30.72$ MHz, $\Delta d =  c\, \Delta t = c/\Delta f \approx 3 \times 10^8\, \text{m/s}/30.72\, \text{MHz} \approx 9.8\, \text{m},$
where $c$ is the speed of light. 

To workaround this limitation in terms of resolution, the two signals may be upsampled by a factor $K$ before computing their cross-correlation. Indeed, by tuning this parameter it becomes possible to tune the desired precision of the ToF measurements. Unfortunately, in practical implementations parameter $K$ cannot be increased indefinitely. In particular, there is a tradeoff between the upsampling factor and the accuracy of the ToF measurements given that the higher $K$, the lower the Signal-to-Noise Ratio (SNR) of the autocorrelation magnitude peak, this lowers the ability to recognize the peak that is involved in the receiver noise. In our trials, choosing $K = 4$ provides the best performance.

\section{Performance evaluation}
\label{s:results}

In this section we evaluate the performance of \name{} through an exhaustive simulation campaign with synthetic traces followed by experimental results with a proof-of-concept implementation in a rural environment.

{\bf Simulations data.} We assume that the UAV has a constant linear speed. In particular, we consider a discrete range of average user speeds while we set the UAV speed $\bar{v}_d$ to 5 m/s\footnote{When the UAV flight speed is set up to $5$m/s the battery drain is limited, as reported in~\cite{batteryUAV}.}. Moreover, we set the UAV altitude $h$ to $100$ m, as suggested in~\cite{batteryUAV}. This is compliant with national regulations~\cite{nasa_regulation}. 
For testing purposes, we generate circular UAV trajectories with center coordinates $\{x_c,y_c,h\}$ and radius $\rho$, updated over time via Algorithm~\ref{algo:reloc}. We limit $\rho$ to lay within the range $50-250$m as the radius length is driven by the following trade-off: on the one hand, the revolution time along the corresponding trajectory should be minimized to allow for a quick localization; on the other hand, we should account for a safety margin to ensure that the target user is reachable, i.e., within a certain distance from the UAV trajectory coverage, even in cases where the Encoder-Decoder LSTM fails to predict future user positions. Note that we choose $N$ and $L$ to be equal to $100$. For any neural network training, we randomly select $66.6\%$, $22.2\%$ and $11.1\%$ of the available data to build the training, testing and validation datatasets, respectively.  

{\bf Channel models.} The communication channel between the UAV and the UE is an air-to-ground channel. Specifically, we capitalize on the path loss model proposed in~\cite{optimal_altitude} by considering the slow fading effect, thereby modeling it as  
\begin{align}
    PL(h,r) = 20\,log\left(\frac{4 \pi f_c}{c}\right) + \nonumber 20\,log\left(\sqrt{h^2+r^2}\right) +\\ 
    P(h,r)\,\eta_{LoS} + (1-P(h,r))\,\eta_{NLoS} + \Tilde{x},
\end{align}
where $f_c$, $h$ and $r$ denote the carrier frequency, the UAV altitude and the 2D distance between the UAV and the UE, respectively (and reported in Table~\ref{tab:channel_param}~\cite{channel_param}).
\begin{table}[h!]
\caption{Empirical channel parameters}
\label{tab:channel_param}
\centering
\resizebox{0.33\textwidth}{!}{%
\begin{tabular}{cc|cc}
\textbf{Parameter} & \textbf{Value} & \textbf{Parameter} & \textbf{Value}\\  
\hline
\rowcolor[HTML]{EFEFEF}
$\eta_{LoS}$ & $2.3$ dB     & $\eta_{NLoS}$  & $34$ dB \\
a            & $27.23$      & b              & $0.08$  \\
\rowcolor[HTML]{EFEFEF}
$f_c$        & $1.8$ GHz    & $\sigma_{sh}$  & $4$  dB \\
\end{tabular}%
}
\end{table}
$\Tilde{x}$ is a log-normal random variable with standard deviation $\sigma_{sh}$, whereas $\eta_{LoS}$ and $\eta_{NLoS}$ represent the average additional losses in case of Line of Sight (LoS) and Non Light of Sight (NLoS) communication. $P(h,r)$ denotes the probability of LoS and is defined as follows  \begin{equation}
    P(h,r) = \frac{1}{1+a\,exp\left( -b \left( arctan\left(\frac{h}{r}\right) - a \right)\right)},
\end{equation}
where $a$ and $b$ are tuneable parameters depending on the environment. 
As \name{} is designed to operate in a disaster scenario, we take into account the case of a UE covered in rubble. To the best of our knowledge, there are no available measurement campaigns deriving an excess path loss model for such scenario. Nevertheless, we build upon similar works carried out for ground-to-ground propagation environments (e.g., \cite{rubble_laquila,NIST}) and model the additional loss as a constant value, uniformly drawn between $0$ and $60$ dB. Note that we consider the same rubble loss for every measurement taken in the same spot given the slow-varying nature of the phenomenon. We assume that losses obtained from different UAV measurement spots are independent and identically distributed. 

{\bf Performance metrics.} To analyze and compare \name{} against the ground-truth, we select two metrics, the former being the mean localization error for each single UAV revolution, the latter being a similarity index inspired by the Jain's Fairness Index~\cite{jfi_ref} and defined as follows:
\begin{equation}
    SI = \frac{\left(\sum\limits_{n=1}^N ||\vv{x}_t^{(n)} - \hat{\vv{x}}_t^{(n)}||^2\right)^2}{N \sum\limits_{n=1}^N \left(||\vv{x}_t^{(n)} - \hat{\vv{x}}_t^{(n)}||^2\right)^2}.
\end{equation}
SI assumes values ranging from the worst case $1/N$ to the best case $1$ that shows the same localization error for every point of the user trajectory, i.e., the punctual localization error. The higher the SI, the higher the accuracy of the user trajectory reconstruction by means of the 2D CNN, after the deduction of any bias in the localization error. As per simulation environment, we use Python 3.6.8 with Keras 2.2.4~\cite{keras} as a front-end for TensorFlow 1.12~\cite{tensorflow}. In addition, we use MATLAB R2018b to generate user trajectories according to the SLAW model with different settings. 


\begin{figure}[H]
\begin{minipage}[t]{0.47\linewidth}
\includegraphics[width=\linewidth]{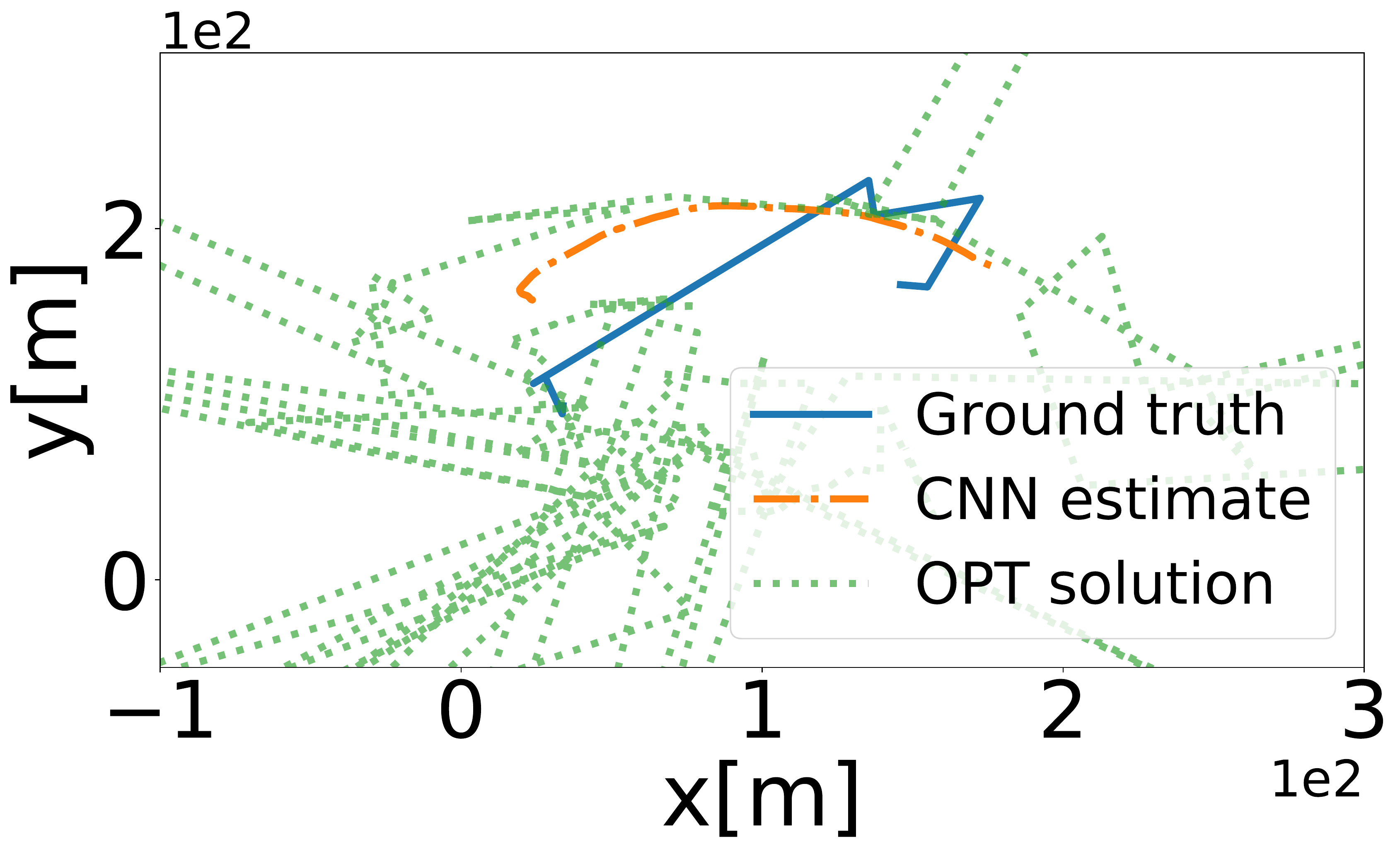}
\caption{Example of the user trajectory for different solutions.}
\label{fig:comparison}
\end{minipage}
\hfill
\begin{minipage}[t]{0.47\linewidth}
\includegraphics[width=\linewidth]{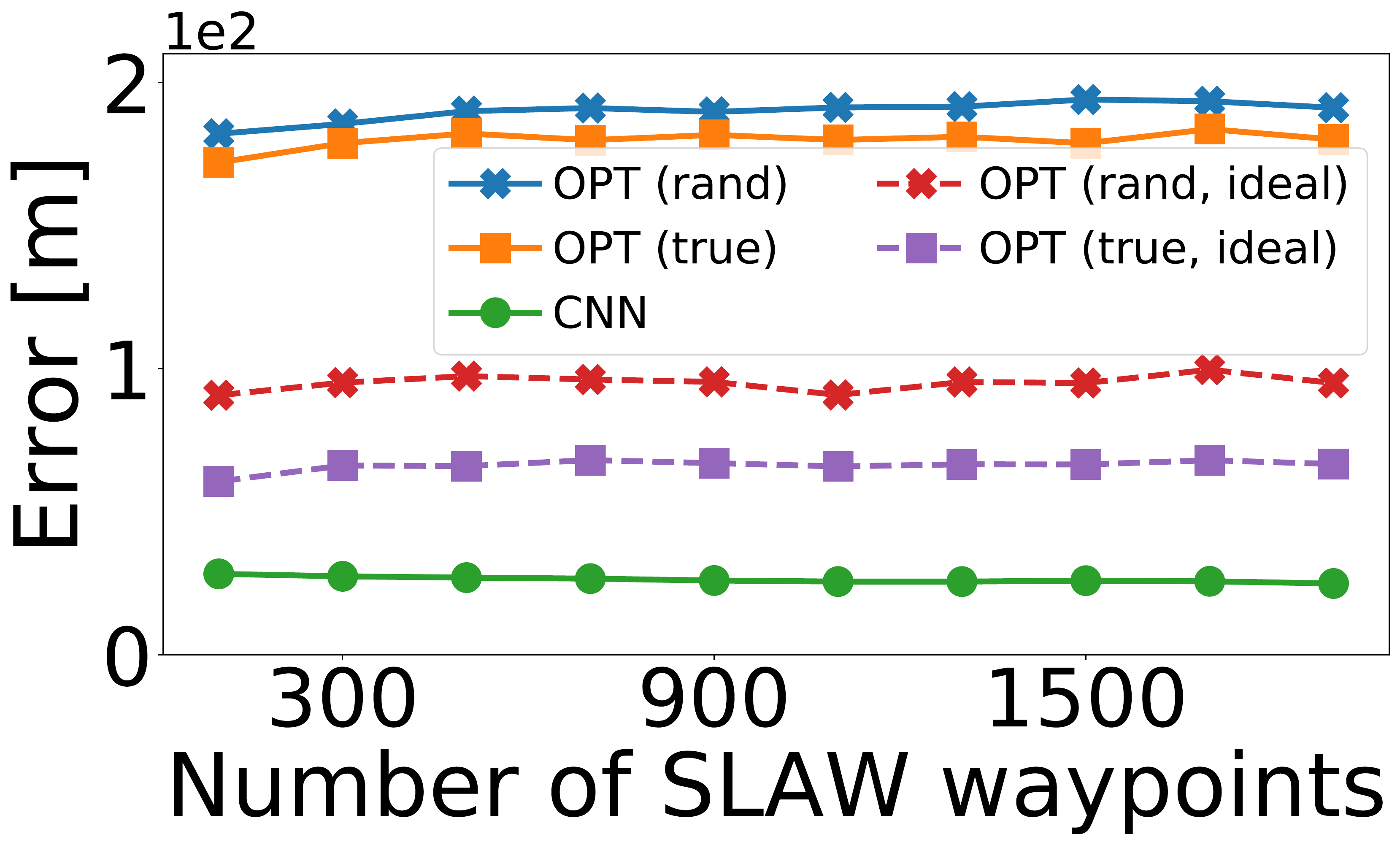}
\caption{Localization error for different SLAW waypoints.}
\label{fig:comparison_err}
\end{minipage}
\end{figure}

\subsection{Learning accuracy}
We evaluate the 2D CNN output based on the pseudo-trilateration process. We develop and solve Problem~\ref{problem:pseudotrilateration} using a simple heuristics (as explained in Section~\ref{s:pseudo-tri}). In Fig.~\ref{fig:comparison}, we compare the real motion pattern of the user (target), the output of the AI-based enhanced localization process (as described in Section~\ref{s:cnn}) and the solution of Problem~\ref{problem:pseudotrilateration} when the user is moving following the SLAW mobility model considering $100$ waypoints, namely points of interest. In particular, the optimization problem solution exhibits worse performance in case of noisy channel conditions. The figure shows that the optimization problem seeks the shortest path between two subsequent distance measurements. However, the performance of the optimization problem strongly depends on the initial condition, i.e., on the first position of the solution vector. In Fig.~\ref{fig:comparison_err}, we benchmark our 2D CNN against the optimal solution using a random initial condition as well as a true localization value (only for the first point of the solution vector). In addition, we evaluate the optimization problem solution when noisy or ideal channel conditions are considered. As shown, the 2D CNN outperforms the optimization problem with noisy and ideal channel conditions by a factor of $2$ and $6$, respectively.
%

\subsection{Pseudo-trilateration validation}
We show the robustness of the introduced 2D CNN against diverse channel fading settings and user trajectories, the latter being generated by differently-tuned SLAW model instances. 

To generate multiple user trajectories in such a way that they are not correlated with the training set used during the neural network training phase, we set different numbers of waypoints considered in the SLAW model. Indeed, the number of waypoints influences how fast the user is moving and the type of motion pattern, i.e., few waypoints lead to a quasi-linear trajectory whereas a huge number of waypoints drives the user to change the motion directions quite often. We let this parameter range from $100$ to $1900$, being the network trained on a dataset of trajectories with $100$ waypoints. 
%
\begin{figure}[H]
\begin{minipage}[t]{0.47\linewidth}
\includegraphics[width=\linewidth]{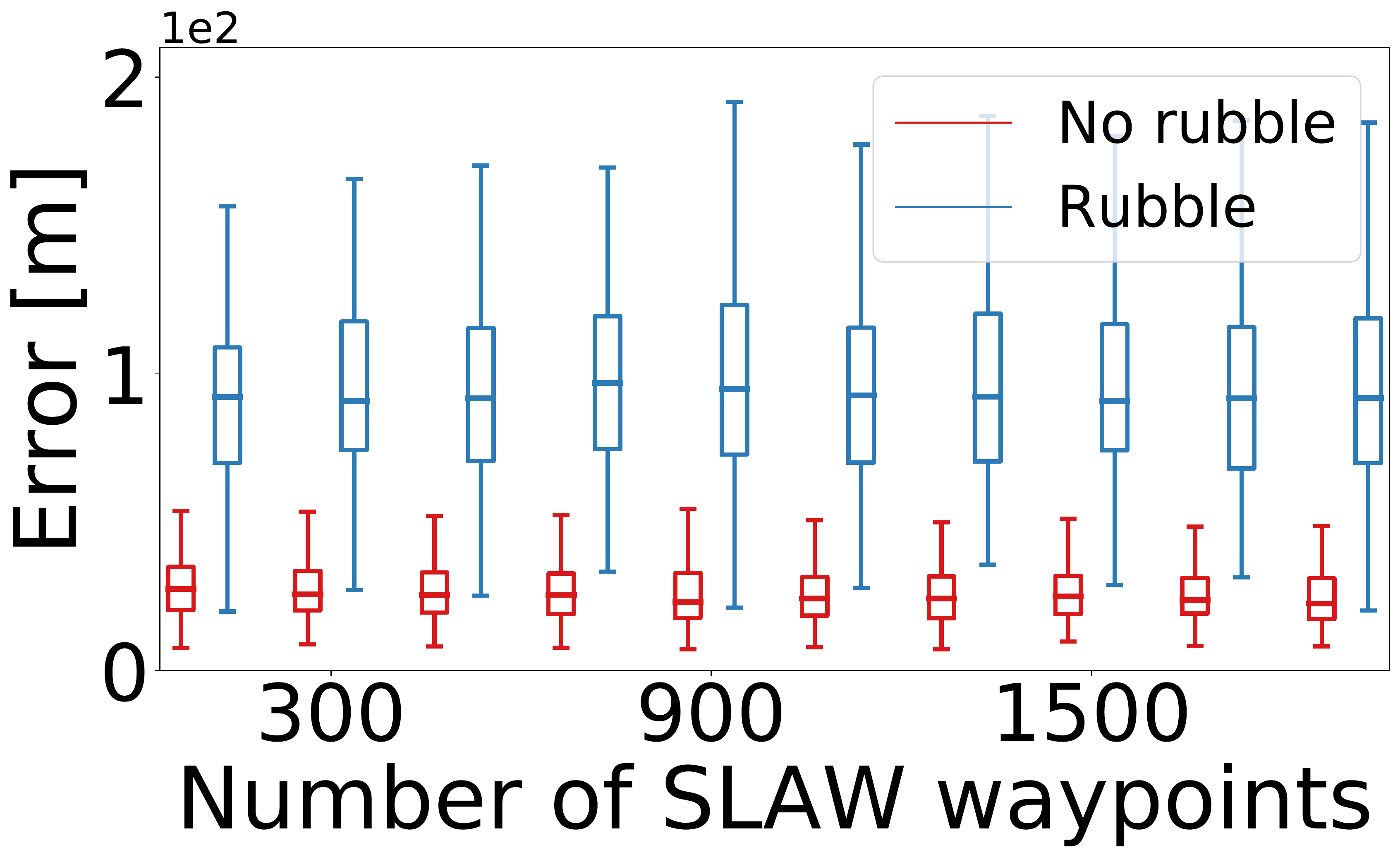}
\caption{2D CNN Localization error with different channel conditions.}
\label{fig:loc_error}
\end{minipage}
\hfill
\begin{minipage}[t]{0.47\linewidth}
\includegraphics[width=\linewidth]{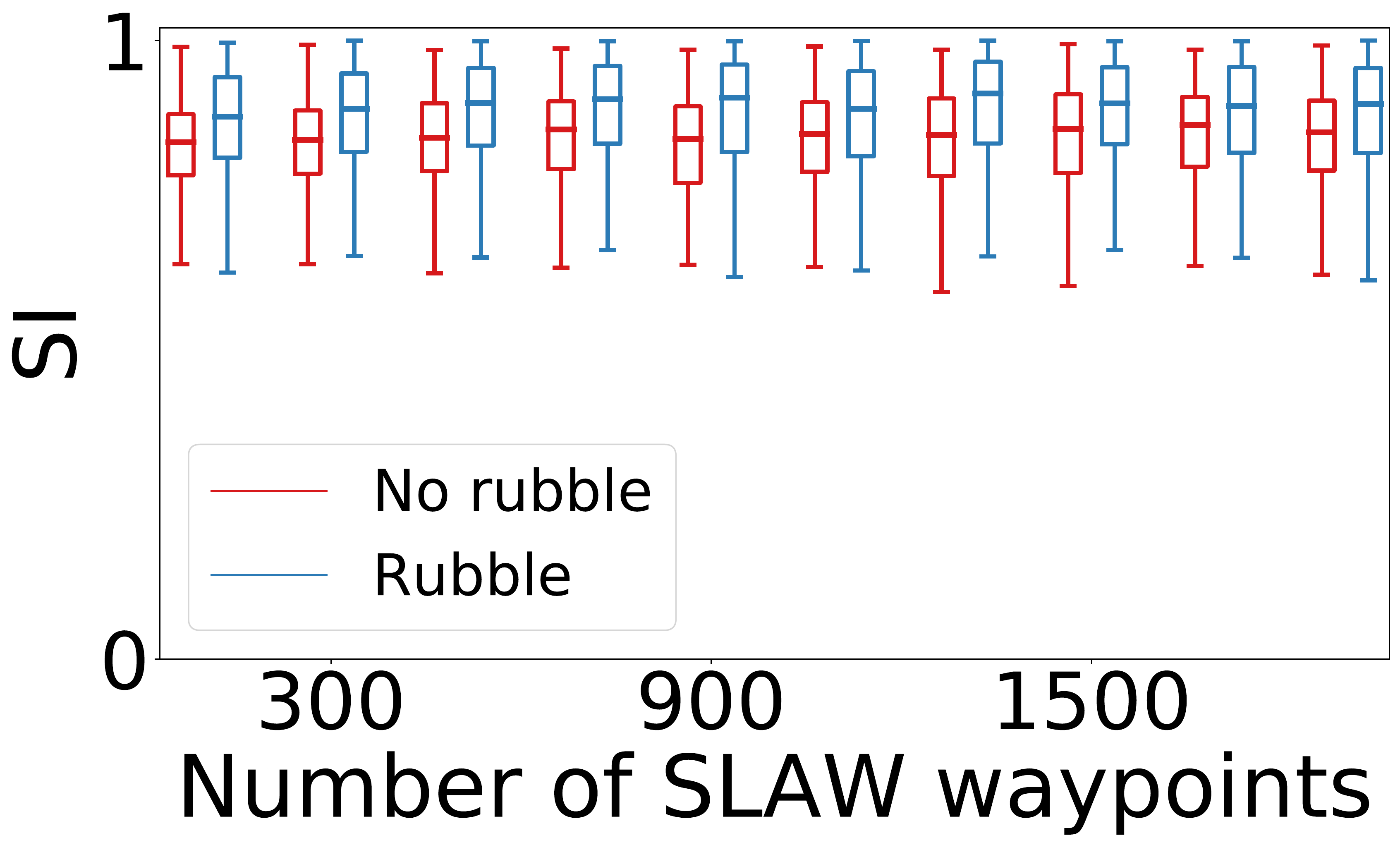}
\caption{2D CNN Similarity index with different channel conditions.}
\label{fig:si}
\end{minipage}
\end{figure}
\begin{figure*}[t]
\begin{minipage}[t]{0.32\linewidth}
\includegraphics[width=\linewidth]{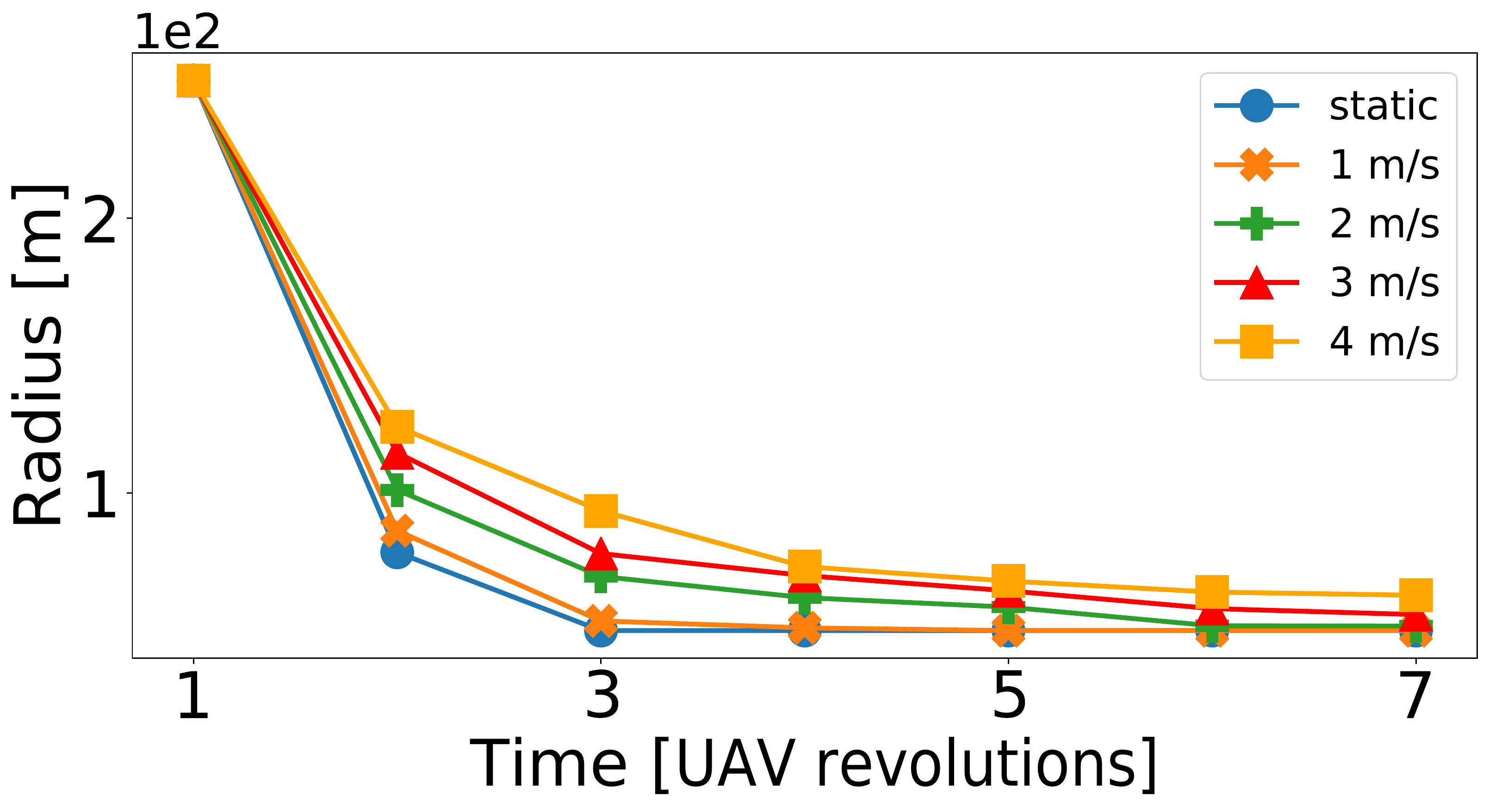}
\caption{Average UAV radius over time for different user speeds.}
\label{fig:radius_time}
\end{minipage}
\hfill
\begin{minipage}[t]{0.32\linewidth}
\includegraphics[width=\linewidth]{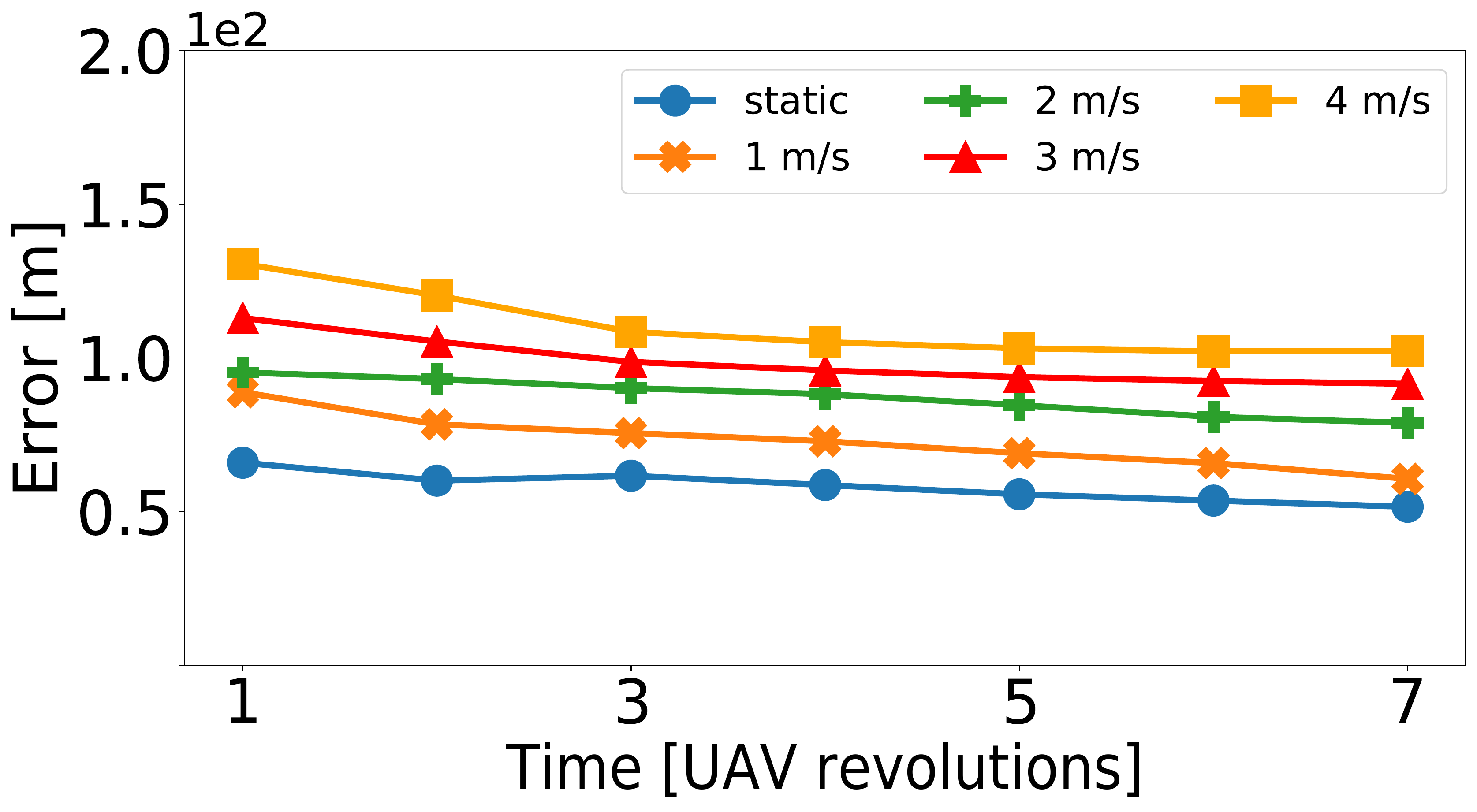}
\caption{Average localization error over time for different user speeds.}
\label{fig:loc_error_time}
\end{minipage}
\hfill
\begin{minipage}[t]{0.29\linewidth}
\includegraphics[width=\linewidth]{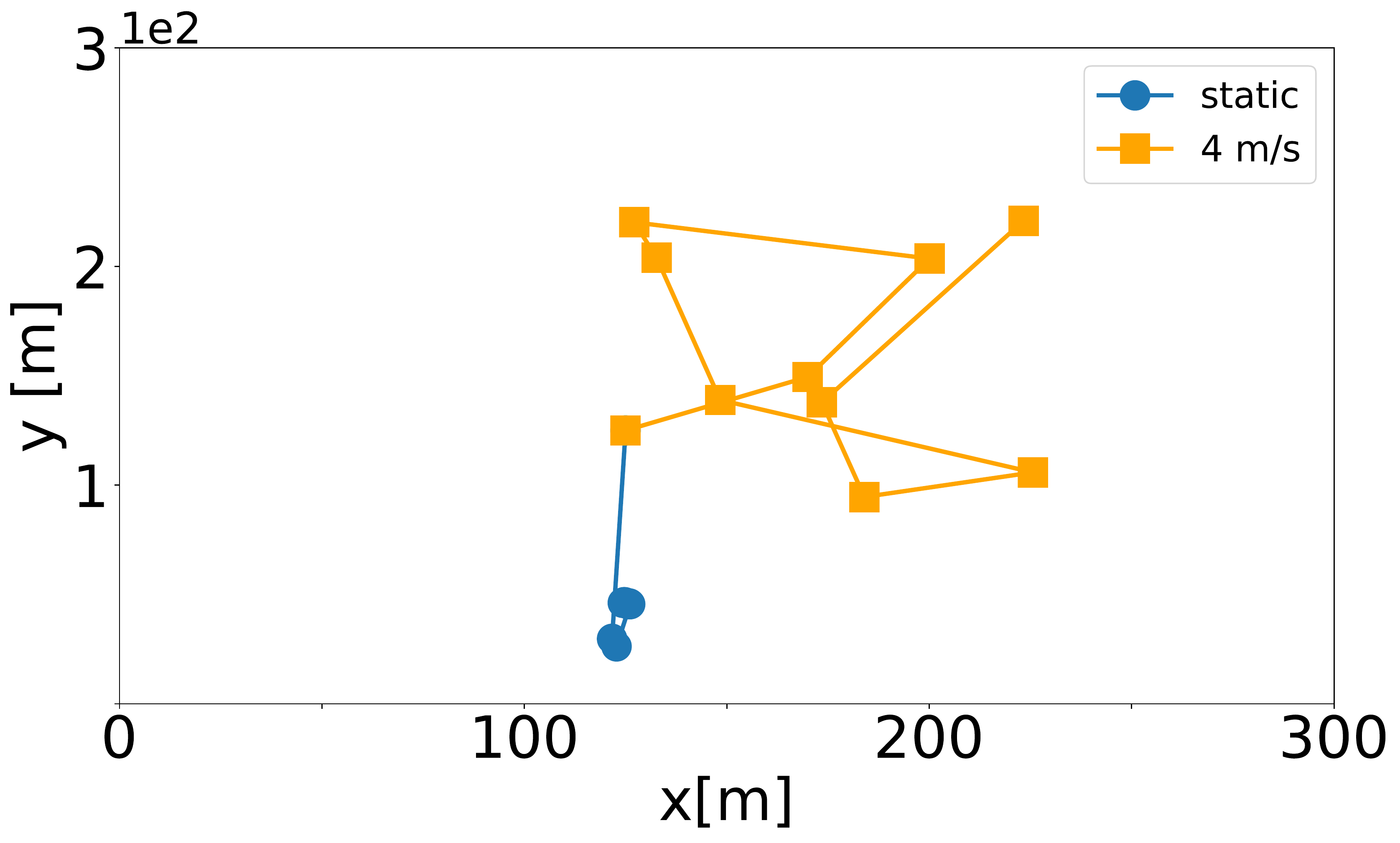}
\caption{UAV centers trajectory for two different user speeds.}
\label{fig:center_time}
\end{minipage}
\end{figure*}

Figs.~\ref{fig:loc_error}, \ref{fig:si} show the distribution of the localization error and the SI considering rubble and no rubble propagation scenarios, respectively. Specifically, the 2D CNN is only trained on a generated dataset without the rubble effect. 
For the no-rubble scenario, we show that the median accuracy is quite stable around $31$ m (regardless of the increasing number of waypoints) with very few outliers around $50$ m. Conversely, when rubble is in place we obviously observe an increased error due to scattering and attenuation phenomena whereas the curve behavior is still invariant to the number of considered waypoints. Note that the standard deviation augments due to the worsening of the receive SNR.
Counter-intuitively, the SI for both case-scenarios is close to $1$: this proves the robustness of the 2D CNN estimate against the error, even in case of large error, such as the one observed with rubble.

%
%
\subsection{\name{} stability performance}
Being \name{} a closed-loop system, we evaluate its performance over time considering multiple UAV revolutions. For this purpose, we simulate several scenarios with different average user speeds. Note that the time is indexed by performed UAV revolutions and that, being the UAV speed constant, the revolution time is proportional to the trajectory radius. 
%
%

The localization error increases with the user speed, being the UAV speed set to the maximum user speed that allows for a reasonably accurate localization. Albeit a user speed bound equal to $5$ m/s is compatible with a disaster scenario, it is always possible to increase the UAV speed---up to $19$ m/s for our particular UAV model---and cope with higher user speeds. It is worth noting that the localization error and the trajectory radius do not have a strictly monotonic trend due to the feedback loop that might fail and needs to recover from previous wrong decisions, as depicted in Figs.~\ref{fig:radius_time},\ref{fig:loc_error_time}.
Finally, we show in Fig.~\ref{fig:center_time} the variation of the center of the UAV trajectory for both static users and $4$m/s user speed scenarios. While the static user scenario allows the UAV to move and converge exactly on the user position, a nomadic user may drive the UAV towards different locations but, still, reducing the trajectory radius and increasing the localization accuracy.
 



\subsection{Proof of Concept Experimental Results}
\label{s:real_meas}
\begin{figure}[!b]
\begin{minipage}[t]{0.49\linewidth}
\includegraphics[width=\linewidth]{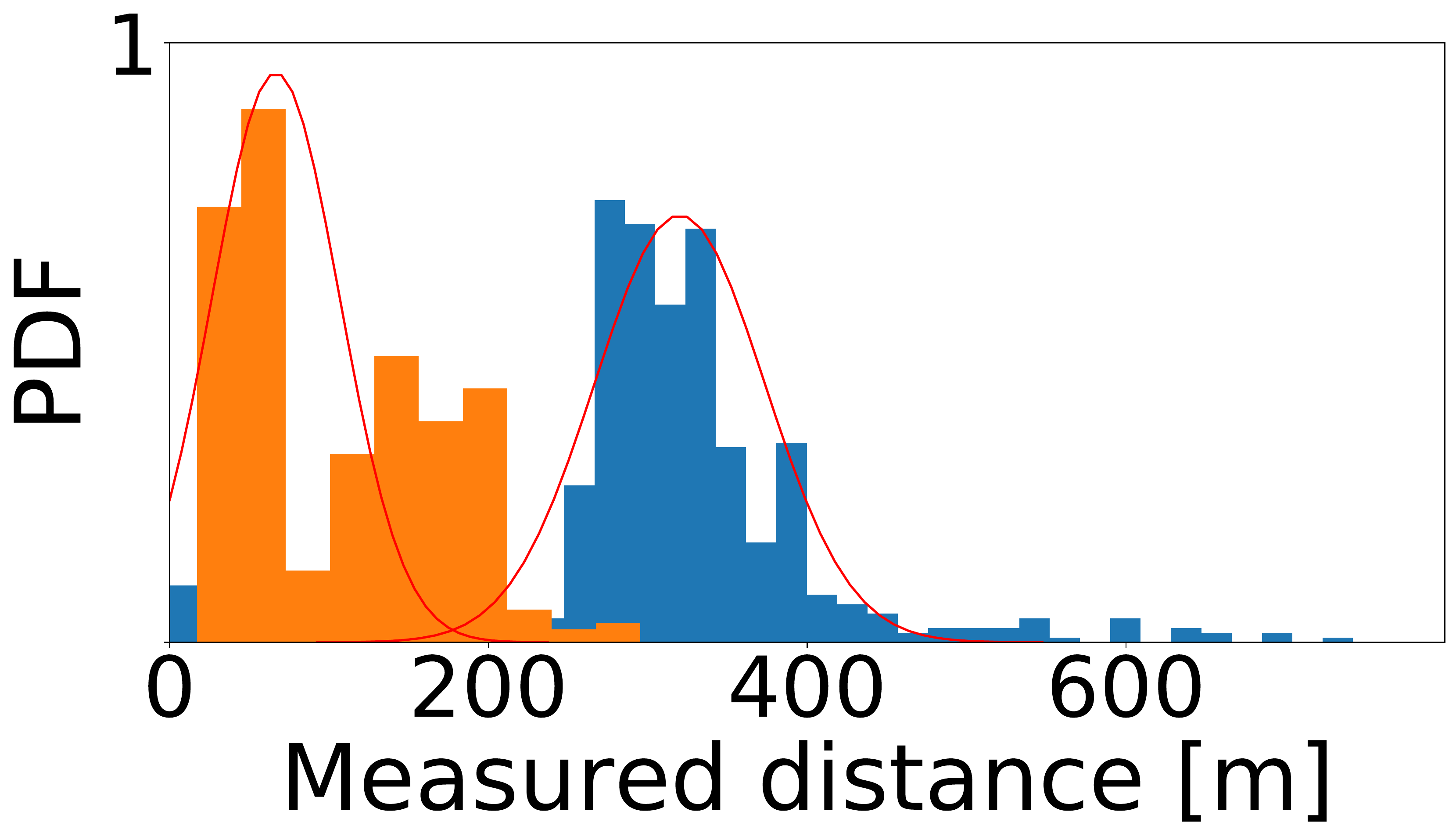}
\caption{Distribution of two sets of real distance measurements.}
\label{fig:ray_measurements}
\end{minipage}
\hfill
\begin{minipage}[t]{0.49\linewidth}
\includegraphics[width=\linewidth]{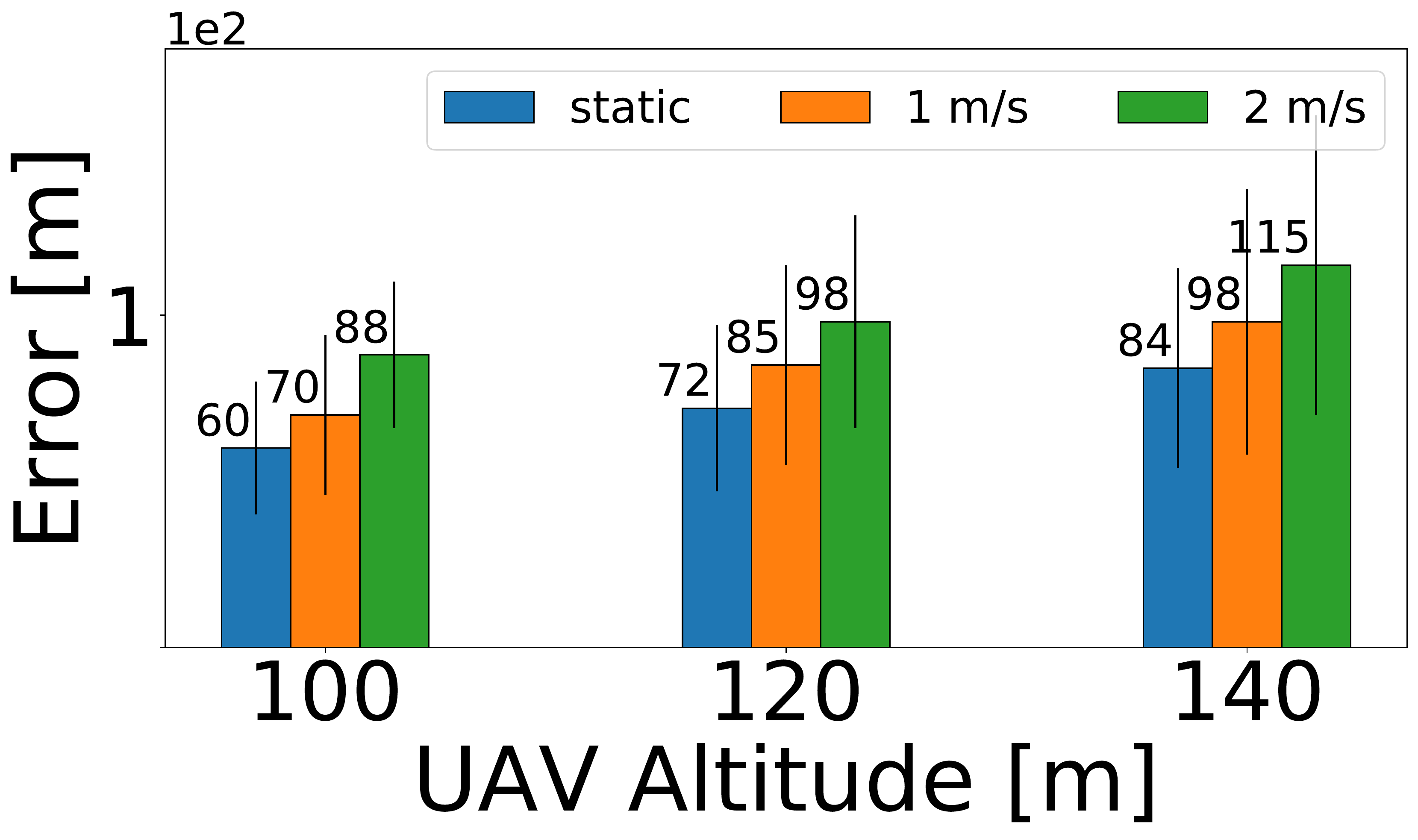}
\caption{2D CNN performances with different UAV altitudes and user speeds.}
\label{fig:error_altitude_meas}
\end{minipage}
\end{figure}
Hereafter, we test \name{} in a field-trial taking real measurements with the prototype described in Section~\ref{s:implementation} in a rural area. 

{\bf Dataset and Neural Network.} First, we show in Fig.~\ref{fig:ray_measurements} the distribution of a dataset containing \name{}'s distance measurements for a victim in two different locations. 
As it can be observed, the obtained values (depicted as bar plots) are distributed as Gaussian variables (solid lines) with different variance. In particular, the higher the actual distance, the higher the variance of the distribution.
\begin{figure}[!b]
\begin{minipage}[t]{0.48\linewidth}
\includegraphics[width=\linewidth]{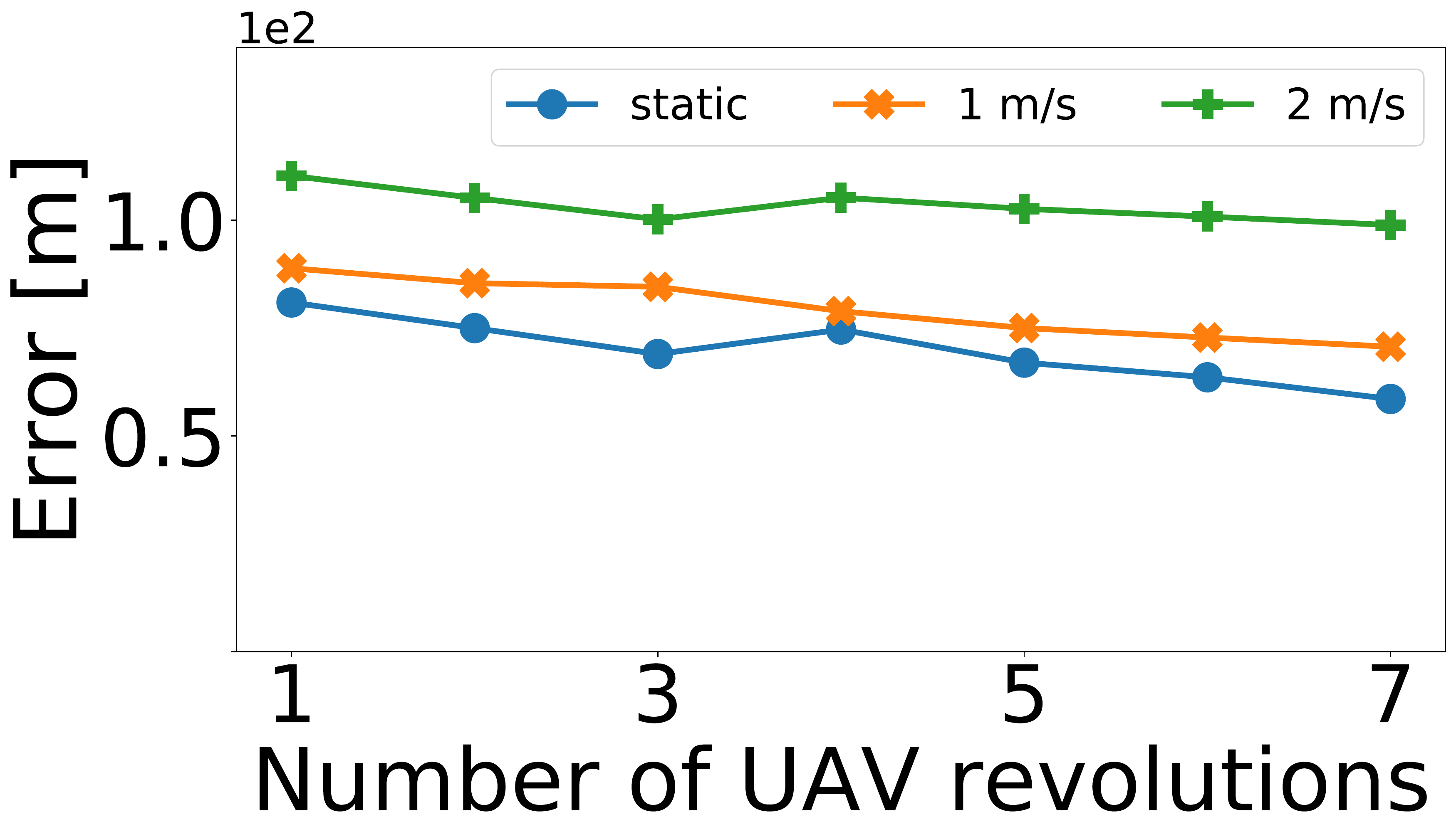}
\caption{Time evolution of \name{} localiz. error for different user speeds.}
\label{fig:error_time_meas}
\end{minipage}
\hfill
\begin{minipage}[t]{0.48\linewidth}
\includegraphics[width=\linewidth]{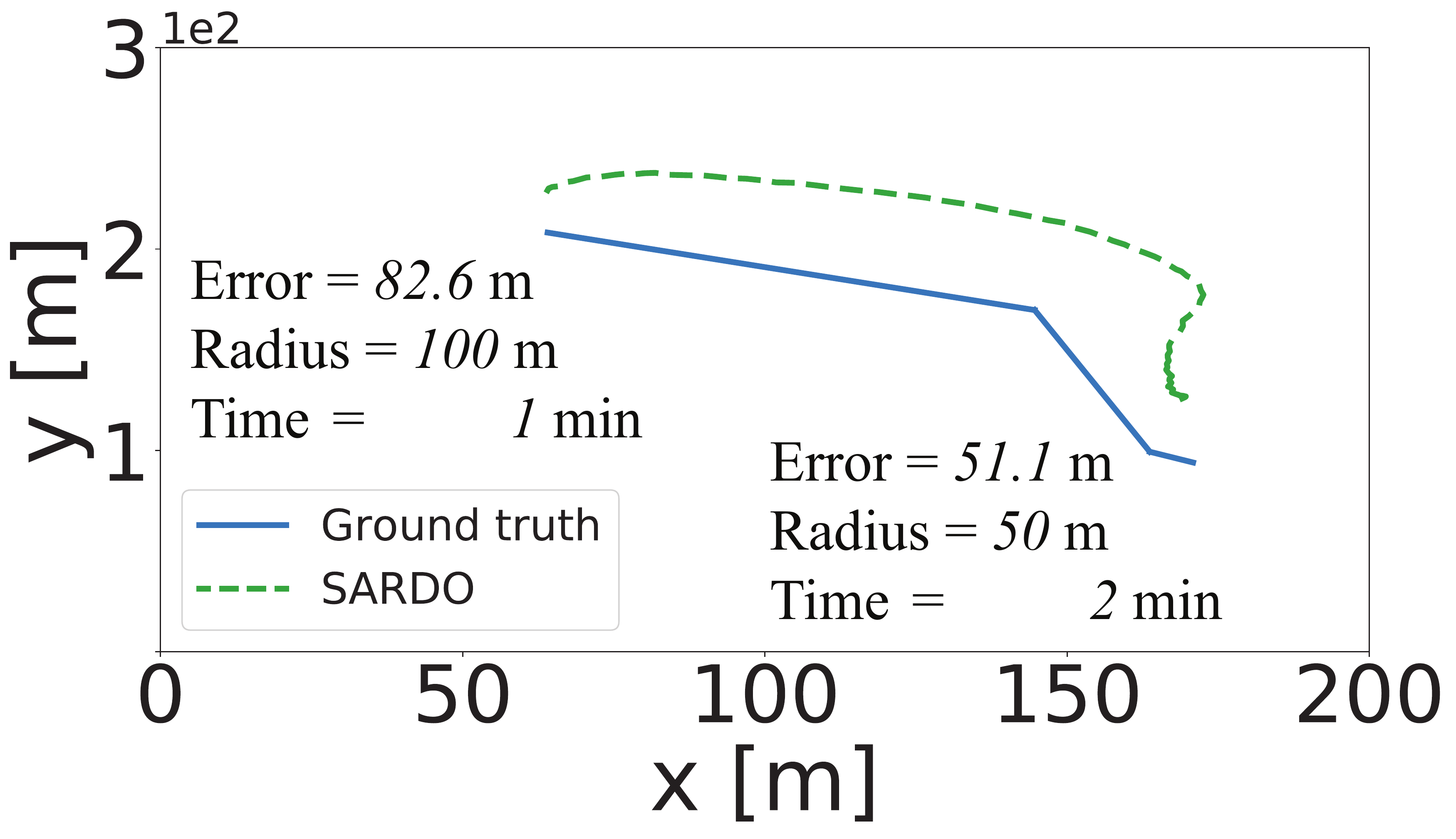}
\caption{Localization of a single user with \name{} in a real scenario.}
\label{fig:real_measurements}
\end{minipage}
\end{figure}
In Fig.~\ref{fig:error_altitude_meas}, we showcase the 2D CNN performance on real measurements collected at different UAV altitudes and user speed values. As expected, the localization error tends to increase for larger speed values and higher altitudes. As in the rubble scenario, we have trained the neural network on a dataset generated with a fixed UAV altitude equal to $100$m but we get 
similar error values for different UAV altitudes.

{\bf Localization error.} Hereafter, we assess the performance of the full \name{} prototype. Fig.~\ref{fig:error_time_meas} shows the average error evolution over time for different user speeds. Notably, the error curve exhibits a decreasing slope that proves the convergence of our solution. When the user speed gets closer to the UAV speed (set to $5$ m/s) the convergence rate reduces as the UAV trajectory radius is kept constant (or increases), as explained in Section~\ref{s:controller}.
In Fig.~\ref{fig:real_measurements}, we report an example of a full localization process for a moving victim. In particular, the missing person moves along a polygonal chain with a speed of $1$m/s while the initial UAV trajectory radius is $100$m. After the first UAV revolution, the radius of the UAV trajectory is set to the minimum value ($50$m), thereby significantly reducing the localization error (from $82.6$m to $51.1$m). In this experiment, \name{} is able to localize the victim within $3$ minutes.

{\bf Rescue operations.} We focus on a static victim showing \name{}'s applicability during emergency situations. In particular, in Fig.~\ref{fig:error_time_altitude} we outline the total time needed to localize the target (upon convergence) for different UAV altitudes and the corresponding average localization error. It is worth noting that at $100$m, both metrics are minimized. 
Such an optimal altitude is obtained as a trade-off between the limited aperture illumination (due to the effect of the antenna radiation pattern at low altitudes) and the low received power (due to the stronger path loss at high altitudes). 
This trend is further confirmed by the UAV cell radius model derived in~\cite{optimal_altitude}.

{\bf Battery cost.} Fig.~\ref{fig:battery_measurements} shows the relative cost of our solution on the UAV battery life. As expected, the battery impact of \name{} decreases as the drone altitude increases. In the optimal operation point reported in Fig.~\ref{fig:error_time_altitude} (100m) the value is $\sim  5\%$, which is a reasonable cost for the added search and rescue localization functionality.
\begin{figure}[!t]
\begin{minipage}[t]{0.48\linewidth}
\includegraphics[width=\linewidth]{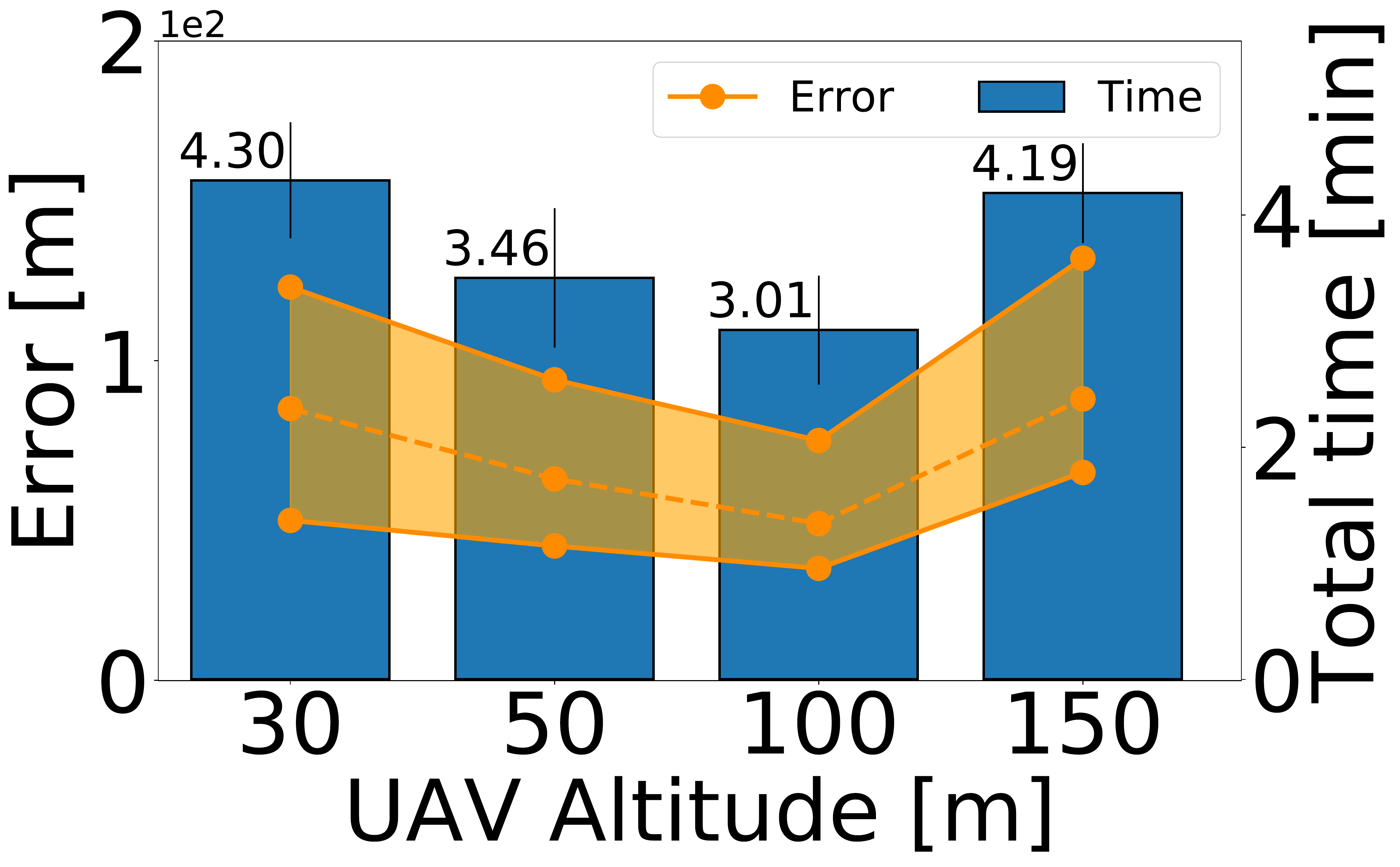}
\caption{\name{}'s localization error and time for different UAV altitudes.}
\label{fig:error_time_altitude}
\end{minipage}
\hfill
\begin{minipage}[t]{0.48\linewidth}
\includegraphics[width=\linewidth]{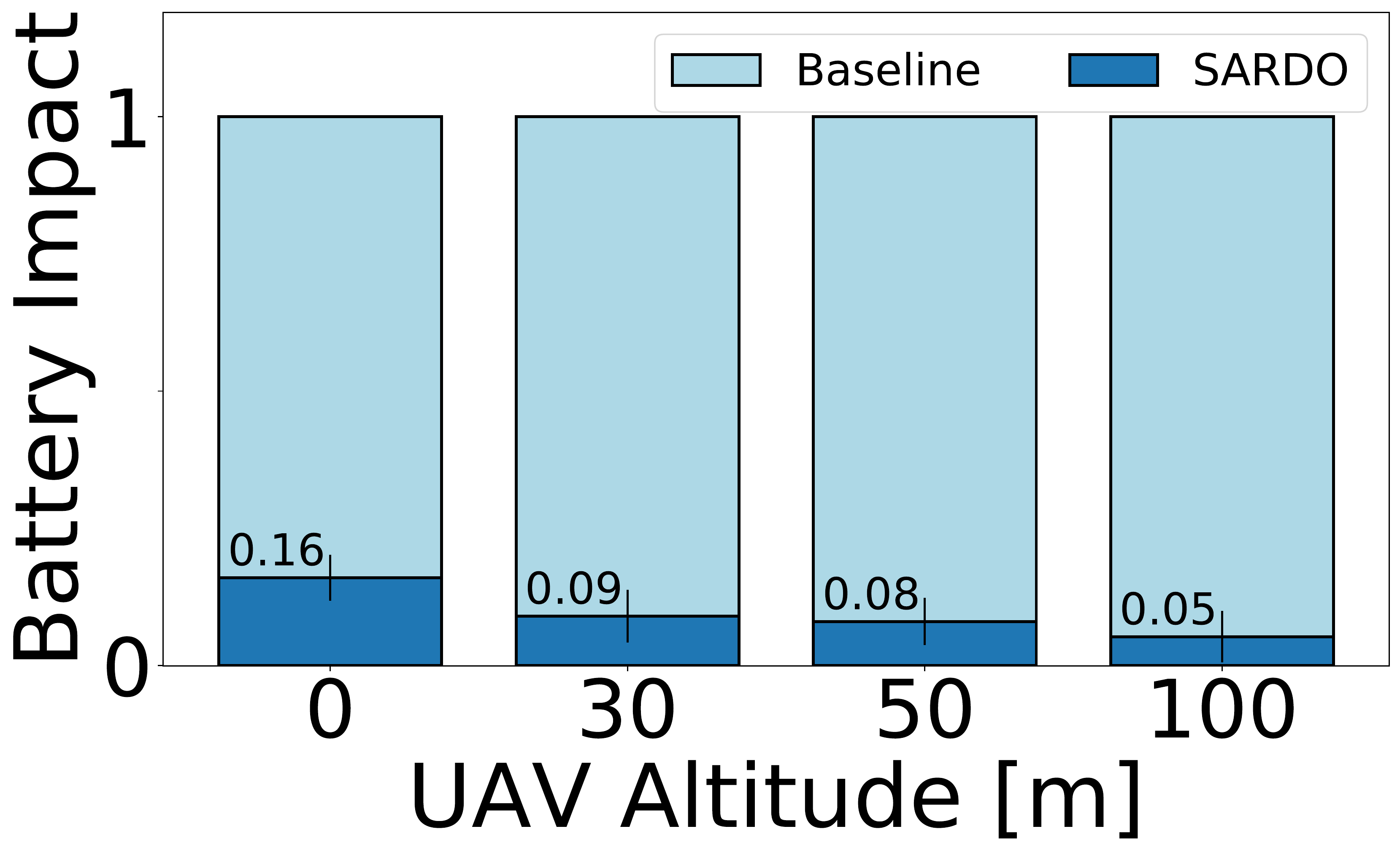}
\caption{\name{}'s impact on UAV battery life in a real scenario.}
\label{fig:battery_measurements}
\end{minipage}
\end{figure}



\section{Related Work}
\label{s:related}



\hspace{\parindent} {\bf Localization systems.} Geo-localization has been exhaustively investigated providing reasonable results in the field of positioning systems. A mathematical formulation of the user position based on multiple distance measurements is provided in~\cite{GPSstoch1998}, where a closed-form stochastic position algorithm is described. In~\cite{Doukhnitch2008,design3D_WSN}, the authors boil down the complexity of a 3D localization system by means of a vector rotation and mobile beacon node, respectively, whereas visible light communication is exploited in~\cite{Zhu2017Mobisys,Li2018Mobicom}. Recently, sensor networks have been identified as the main application for more accurate localization systems. In~\cite{balaji2018,portable_infra}, the authors propose a novel mechanism based on the trilateration solution to identify and localize moving objects exploiting cooperation between different anchor nodes. The concept of pseudo-triangulation is proposed in~\cite{Kolingerova2011} as a walk-location algorithm. 
Last, a survey on localization methods applied to different mobile network generation deployments is provided in~\cite{surveyLoc2018}.

{\bf AI-based approaches for localization.} One of the first works applying machine learning concepts to solve localization problems is~\cite{Shareef2008}. In particular, the authors carry out a deep evaluation process with three different families of neural networks against the well-known Kalman filter in terms of localization accuracy as well as computational and memory resource requirements. Also in~\cite{jsacNguyen2015} the localization accuracy issue is treated in harsh environments. Specifically, the common NLOS effect in wireless scenarios is mitigated so as to improve the accuracy of the system. Finally, the authors of~\cite{LuoTNSE2018} recently published a work to predict the channel state information (CSI) using an online algorithm. While this is not explictly related to localization systems, the proposed learning framework may be used to accurately measure the channel information and to feed a general localization system based on the trilateration approach. 

{\bf UAV optimal placement.} Efficient UAV deployment procedures have inspired many researchers in the last decade due to technological constraints and the high number of challenges involved. \cite{ZhaoJSAC2018} builds a UAV airborne network with centralized and distributed algorithms. 
The UAV trajectory should take into account the flight time, energy constraints, ground users' QoS and collision avoidance requiring an iterative complex optimization problem resolution that provides, time by time, the next UAV position~\cite{drone_magazine}. \cite{trackIONSDI} describes in detail a localization and tracking system with very high accuracy for first responders in indoor buildings leveraging UAV capabilities with ultra-wideband (UWB) features. 

{\bf LTE and Drone-based Commercial Localization Solutions.}
In Table~\ref{tab:loc_systems} we provide a summary of \name{} against commercial localization solutions. The solutions are either assuming that the mobile infrastructure is available or use cameras  on drones for human/computer vision, IR, thermal-based localization. 

Based on the related work review above, to the best of our knowledge none of the localization solutions available considers a drone-based cellular localization system with no mobile infrastructure or GNSS support available. 

\begin{table}[t!]
\caption{Relevant cutting-edge positioning solutions}
\label{tab:loc_systems}
\centering
\resizebox{.48\textwidth}{!}{%
\begin{tabular}{c|c|c|c|c}
     & \textbf{Method} & \makecell{\textbf{\#BSs} or \\\textbf{\#Drones}} & \makecell{\textbf{Infrastr.} \\\textbf{agnostic}} & \textbf{Accuracy}\\ 
\hline
\rowcolor[HTML]{EFEFEF}
LTE~\cite{LTE_loc} & E-CID         & 1         & \xmark & $>150$ m\\
    &  OTDoA        & $\geq 3$  & \xmark & $50-200$ m\\
\rowcolor[HTML]{EFEFEF}
    &  A-GNSS       & $1$  & \xmark & $<10$ m\\
\hline
\makecell{AERYON\\ Skyranger~\cite{Aeryon}}    & \makecell[l]{Camera (visible,\\ IR, Thermal)}    & 1         & \checkmark & \makecell{human visual\\ perception}\\
\hline
\rowcolor[HTML]{EFEFEF}
\makecell[l]{DELAIR Aerial\\ Intell. Platform~\cite{Delair}} & \colorbox[HTML]{EFEFEF}{\makecell[l]{Camera (visible,\\ IR, Thermal)}} & 1 & \checkmark & \makecell{human visual\\ perception}\\
\hline
\textbf{SARDO} & \textbf{Pseudo-Trilateration} & \textbf{1} & \bbcheckmark & \textbf{$\leq$ 50 m} \\
\hline
\end{tabular}
}
\end{table}


\section{Conclusions}
\label{s:concl}

Due to the flexible deployment possibilities and capabilities of modern drones, unmanned aerial vehicles (UAVs) are ideal candidates for novel localization systems when victims are sparsely distributed in large and/or difficult-to-reach areas.


In this context, we presented here \name{} which, to the best of our knowledge, is the first \emph{cellular-based} drone search and rescue localization system. \name{} localizes missing people (assumed to be close to their phones) with an accuracy of few tens of meters. It requires a few minutes per phone to locate them and achieves this at a low battery cost. 


The properties of the \name~solution can be summarized as follows: $i$) drone-based cellular localization solution for disaster scenarios where the mobile infrastructure is out of service and UE GNSS information is not available, $ii$) support for localization of multiple victims by running \name{} sequentially within a given area, $iii$) ML-driven improvement of the localization accuracy through a feedback control loop and $iv$) \emph{automated} localization operations given a GNSS-defined search area. 

\name{} has been implemented with COTS components and tested in a field-trial on a rural area~\cite{onlinevideo}. Our results proved the feasibility of the solution and provided quantitative results on the expected performance in practice. 



\bibliographystyle{IEEEtran}
\bibliography{IEEEabrv,bibliography}

\end{document}